\documentclass{article}
\usepackage{microtype}
\usepackage{graphicx}
\usepackage{subcaption}
\usepackage{booktabs} 

\usepackage{hyperref}

\usepackage[preprint]{conf2026}

\usepackage{amsmath}
\usepackage{amssymb}
\usepackage{mathtools}
\usepackage{amsthm}

\usepackage[capitalize,noabbrev]{cleveref}

\theoremstyle{plain}
\newtheorem{theorem}{Theorem}[section]
\newtheorem{proposition}[theorem]{Proposition}
\newtheorem{lemma}[theorem]{Lemma}

\theoremstyle{definition}

\newtheorem{assumption}[theorem]{Assumption}
\theoremstyle{remark}

\usepackage[textsize=tiny]{todonotes}

\usepackage[utf8]{inputenc}
\usepackage{booktabs}  
\usepackage{tabularx}             
\usepackage{pifont}   

\usepackage{multirow}
\usepackage[table]{xcolor}

\usepackage{makecell}

\conftitlerunning{Permutation-Invariant Representations for Combinatorial Treatments}

\begin{document}

\twocolumn[
  \conftitle{Orthogonal Uplift Learning with Permutation-Invariant Representations for Combinatorial Treatments
}

\newcommand{\confsetaffiliationnumber}[2]{%
  \ifcsname the@affil#1\endcsname\else
    \newcounter{@affil#1}%
  \fi
  \setcounter{@affil#1}{#2}%
  \ifnum\value{@affiliationcounter}<#2\relax
    \setcounter{@affiliationcounter}{#2}%
  \fi
}
 \confsetsymbol{equal}{*}        
\confsetaffiliationnumber{fi}{1} 
\confsetaffiliationnumber{se}{2} 

  \begin{confauthorlist}
  \confauthor{Xinyan Su}{equal,fi}
    \confauthor{Jiacan Gao}{equal,se}
    \confauthor{Mingyuan Ma}{comp}
    \confauthor{Xiao Xu}{fi}
    \confauthor{Xinrui Wan}{fi}
    \confauthor{Tianqi Gu}{fi}
    \confauthor{Enyun Yu}{fi}
    \confauthor{Jiecheng Guo}{fi}
    \confauthor{Zhiheng Zhang}{shangcai1,shangcai2}
  \end{confauthorlist}

\confaffiliation{fi}{Didi Chuxing, Beijing, China}
\confaffiliation{se}{School of Statistics, East China Normal University, Shanghai, China}

  \confaffiliation{comp}{School of Mathematics and Statistics, Beijing Jiaotong University, Beijing, China}
  \confaffiliation{shangcai1}{School of Statistics and Data Science, Shanghai University
of Finance and Economics, Shanghai 200433, P.R. China}
  \confaffiliation{shangcai2}{Institute of Data Science and Statistics, Shanghai University of Finance and Economics,
Shanghai 200433, P.R. China}

  \confcorrespondingauthor{Zhiheng Zhang}{zhangzhiheng@mail.shufe.edu.cn}

  \confkeywords{Causal Inference, Permutation-Invariant Sets, Treatment Effect Estimation}

  \vskip 0.3in
]

\printAffiliationsAndNotice{\confEqualContribution}

\begin{abstract}
We study uplift estimation for \emph{combinatorial} treatments. Uplift measures the {pure incremental causal effect} of an intervention (e.g., sending a coupon or a marketing message) on user behavior, modeled as a conditional individual treatment effect. Many real-world interventions are \emph{combinatorial}: a treatment is a policy that specifies context-dependent action distributions rather than a single atomic label.
Although recent work considers structured treatments, most methods rely on categorical or opaque encodings, limiting robustness and generalization to rare or newly deployed policies. We propose an uplift estimation framework that aligns treatment representation with causal semantics.
Each policy is represented by the mixture it induces over context--action components and embedded via a {permutation-invariant} aggregation.
This representation is integrated into an {orthogonalized low-rank uplift model}, extending Robinson-style decompositions to learned, vector-valued treatments. We show that the resulting estimator is expressive for policy-induced causal effects, orthogonally robust to nuisance estimation errors, and stable under small policy perturbations.
Experiments on large-scale randomized platform data demonstrate improved uplift accuracy and stability in long-tailed policy regimes.
\end{abstract}

\section{Introduction}\label{intro}

Estimating individualized causal effects plays a central role in data-driven decision-making, enabling practitioners to compare alternative interventions and select those that maximize expected incremental outcomes.
This problem is commonly formalized through the conditional average treatment effect (CATE), which has been extensively studied in statistics, econometrics, and machine learning
\citep{rubin1974estimating, imbens2015causal, athey2016recursive}.
A rich body of work has developed estimation strategies for CATE under binary or low-cardinality treatments, including meta-learners, doubly robust methods, and orthogonalized estimators
\citep{chernozhukov2018double, kunzel2019metalearners,  kennedy2020optimal}.
These approaches have achieved notable success in classical A/B testing and personalization tasks.

\begin{figure*}[t]
    \centering
    \includegraphics[width=1.0\linewidth]{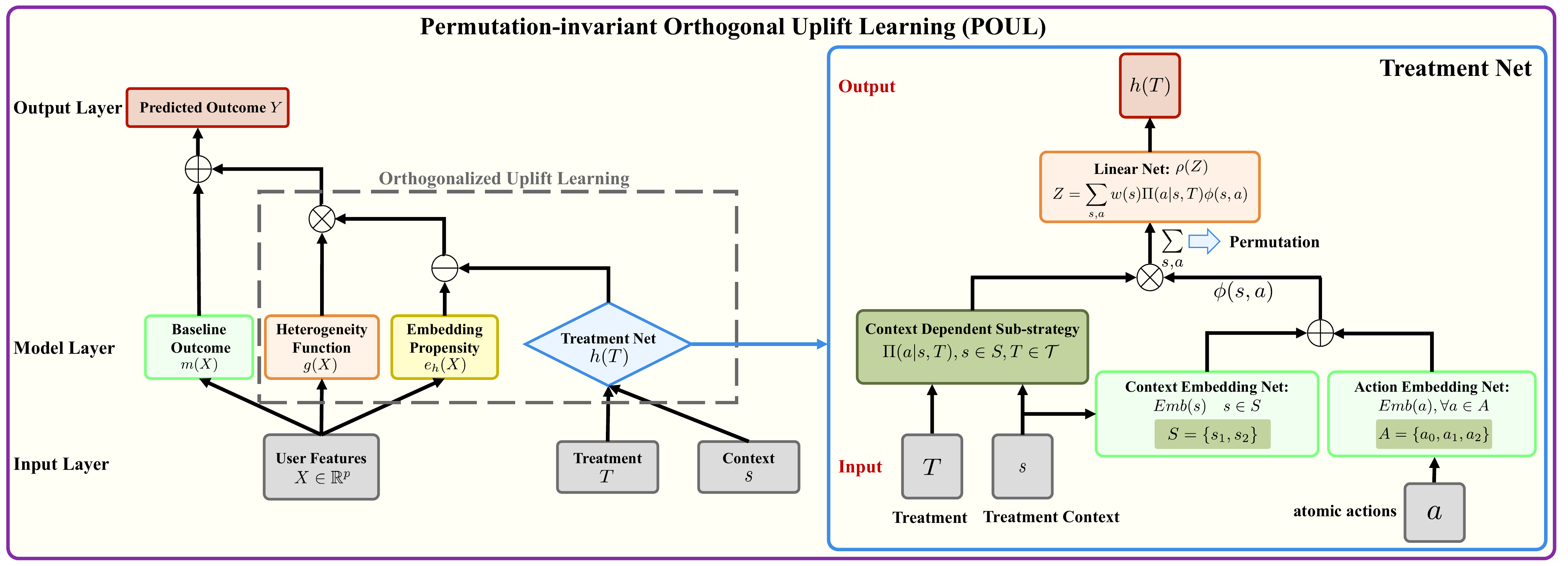}
    \caption{\textbf{Treatment Net: permutation-invariant embedding of combinatorial treatments.}
    A treatment $T$ is observed through its policy specification: for each context $s\in S$ (instantiated here as $\{s_1, s_2\}$), the treatment induces a distribution over atomic actions $a\in \mathcal{A}$ (instantiated as $\{a_0, a_1, a_2\}$).
    The network first learns embeddings for contexts and actions to form atom representations $\phi(s,a)$.
    These atoms are then reweighted by policy probabilities and aggregated via a permutation-invariant sum $z(T)$, ensuring the representation depends only on the induced mixture.
    Finally, a learnable map produces the treatment embedding $h(T)$, which is used by the orthogonalized uplift model (left panel) to estimate the causal effect $\tau(x)$.}
    \label{fig:treatment-net}
\end{figure*}

In many modern applications, however, a treatment is rarely a single atomic action.
Instead, it is a \emph{policy} that specifies how actions are chosen across multiple treatment-side contexts, such as user segments, locations, or stages of a workflow.
For instance, an online platform may deploy a policy that assigns different incentive levels depending on spatiotemporal demand conditions.
Such interventions naturally induce a \emph{structured} or \emph{combinatorial} treatment space, where high-level treatments are composed of reusable lower-level components. A practical difficulty is that these policies are constantly recombined: a new policy is often created by changing only a few rules (e.g., swapping the action for one segment).
This setting differs fundamentally from standard multi-treatment problems, where treatments are modeled as unrelated categorical labels.
Recent work has begun to explore structured treatments using embeddings or graph-based representations
\citep{bica2020estimating, guo2021graphiite}, but these methods typically rely on opaque encodings that do not explicitly reflect the compositional semantics of policies.

A key challenge in combinatorial treatment settings is that the statistical representation of treatments often fails to align with their causal meaning.
In practice, policies are frequently iterated and implemented by different teams, leading to variations in indexing, ordering, or internal identifiers.
Naively encoding each policy as a one-hot label treats such variants as unrelated, even when they induce identical or nearly identical interventions or differ only by minor local modifications.
This mismatch has two critical consequences.
First, it prevents statistical strength from being shared across related policies, exacerbating variance in long-tailed treatment regimes.
Second, it destroys any meaningful notion of proximity between treatments, making it impossible to reason about smooth policy perturbations or to generalize to new strategies.
These issues arise simultaneously in theory and practice, and cannot be resolved by simply increasing model capacity.

Motivated by these observations, we adopt the perspective that a policy affects outcomes primarily through the mixture it induces over context--action components.
Under this view, the identity or ordering of internal identifiers is irrelevant; what matters is how often, and under which contexts, each atomic action is realized in the population.
This perspective suggests that treatment representations should respect natural symmetries of policies, including invariance to re-indexing and ordering, while enabling principled parameter sharing across reusable components.
Formally, each policy induces a probability measure over context--action atoms, and causal effects depend on this induced mixture rather than on arbitrary labels or orderings.
This idea resonates with recent advances in invariant and equivariant learning, where respecting known symmetries is essential for generalization \citep{zaheer2017deepsets, kondor2018generalization}.
However, its implications for causal effect estimation under structured policy treatments have not been fully explored.

Building on this principle, we propose a framework for uplift estimation under combinatorial treatments that combines two key ingredients.
First, we introduce a permutation-invariant treatment embedding that represents a policy by aggregating learned embeddings of its context--action components via a commutative sum.
This construction ensures invariance to re-indexing and ordering, while enabling parameter sharing across a large treatment space.
Second, we embed this representation into an orthogonalized low-rank uplift model, extending the Robinson decomposition to vector-valued treatments.
Orthogonalization removes first-order sensitivity to nuisance components, yielding robustness properties analogous to those in double/debiased machine learning
\citep{robins1994estimation, chernozhukov2018double}.

Our analysis establishes that the proposed embedding family is expressive enough to approximate any continuous causal functional of the policy-induced mixture.
Moreover, the orthogonalized formulation guarantees that errors in baseline outcome regression and treatment propensity estimation enter the CATE estimator only at second order.
Crucially, the representation induces a natural metric on policy space under which both the embedding and the implied uplift are stable.
This provides a principled explanation for empirical generalization to rare or previously unseen policy variants, a phenomenon that categorical treatment representations cannot capture.
From a practical perspective, these properties are essential in large-scale experimentation systems characterized by frequent policy iteration and severe data imbalance. Our main contributions are as follows: 1) We formalize individualized uplift estimation under \emph{combinatorial treatments}, modeling each treatment as a context-dependent policy composed of reusable context--action components.
2) We propose a permutation-invariant treatment embedding that aligns statistical representation with causal semantics and enables information sharing and generalization across large policy spaces.
3) We integrate this representation with an orthogonalized low-rank uplift model and establish second-order robustness to nuisance estimation errors.
4) We validate the framework on large-scale randomized experiments, demonstrating improved uplift accuracy and stability, including strong performance in long-tailed and held-out treatment regimes.

\section{Problem Setup and Method}\label{setup}

We have i.i.d.\ data $\mathcal{D}=\{(X_i,T_i,Y_i)\}_{i=1}^n$,
where $X_i\in\mathbb{R}^p$ are unit covariates,
$T_i\in\mathcal{T}$ is a structured treatment (policy) drawn from a finite (possibly large) set,
and $Y_i\in\mathbb{R}$ is the outcome. Let $Y(t)$ denote the potential outcome under treatment $t\in\mathcal{T}$. For any $x$ and any pair $(t_1,t_0)$, we define the individualized uplift as the conditional treatment effect
\begin{equation}
\tau(x;t_1,t_0)
:= \mathbb{E}[Y(t_1)-Y(t_0)\mid X=x].
\label{eq:cate_def}
\end{equation}
At the unit level, the (unobservable) individual treatment effect is $\tau_i(t_1,t_0):=Y_i(t_1)-Y_i(t_0)$.
Throughout, $\tau(x;t_1,t_0)$ is the estimand (the standard CATE); in experiments we refer to the pointwise prediction
$\hat\tau(X_i;t_1,t_0)$ as an ``ITE'' score for brevity.

\begin{assumption} (SUTVA for RCT) Our main setting is randomized assignment.
We assume (i) consistency/SUTVA, (ii) randomized treatment assignment conditional on $X$,
and (iii) overlap. Under these conditions, $\tau(x;t_1,t_0)$ is identified from $\mathcal{D}$.
\end{assumption}

\subsection{Structured Treatment Representation}

\paragraph{Treatment-side contexts and atomic actions.}
We assume each policy operates through a finite set of treatment-side contexts
$\mathcal{S}$ and a finite set of atomic actions $\mathcal{A}$.
A policy $t\in\mathcal{T}$ is specified by a collection of context-dependent {sub-strategies}
$\{\Pi_t(\cdot\mid s)\}_{s\in\mathcal{S}}$, where each sub-strategy $\Pi_t(\cdot\mid s)$ is a distribution over actions in $\mathcal{A}$.
We treat $\{\Pi_t(\cdot\mid s)\}_{s\in\mathcal{S}}$ as the observable specification of a structured treatment.
We also allow nonnegative context weights $w(s)\ge 0$ that capture the relative exposure frequency or importance of context $s$
(e.g., segment prevalence), $w(\cdot)$ can be fixed from logs, domain knowledge or estimated.

\paragraph{Permutation-invariant treatment embedding.}
Let $\phi:\mathcal{S}\times\mathcal{A}\to\mathbb{R}^d$ be an embedding of a context--action atom
and let $\rho:\mathbb{R}^{d}\to\mathbb{R}^d$ be a learnable map.
We define an intermediate aggregated representation $h(t)\in\mathbb{R}^d$:
\begin{align}
z(t) &= \sum_{s\in\mathcal{S}}\sum_{a\in\mathcal{A}} w(s)\,\Pi_t(a\mid s)\,\phi(s,a),
\label{eq:z_def}\\
h(t) &= \rho\!\left(z(t)\right).
\label{eq:h_def}
\end{align}
This construction is invariant to any re-indexing or re-ordering of the underlying context--action components, and enables parameter sharing across policies through the shared atom embedding $\phi$.

\paragraph{Implementation of the treatment network.}
In practice, $\phi(s,a)$ is parameterized as a lookup table over $(s,a)$ (or a sum of separate embeddings for $s$ and $a$),
and $\rho(\cdot)$ is a small MLP (optionally with normalization).
The cost of computing $h(t)$ is $O(|\mathcal{S}||\mathcal{A}|d)$ per policy if done directly,
and can be reduced by caching $\{\phi(s,a)\}$ and exploiting sparsity in $\Pi_t(\cdot\mid s)$.

\subsection{Orthogonalized Factorized Uplift Model}

\paragraph{Model class.}
We consider Neyamen decomposition following \citep{kaddour2021causal}:
\begin{equation}
\begin{aligned}
Y &= m(X) + g(X)^\top\!\big(h(T)-e_h(X)\big) + \varepsilon,
\\
e_h(X)&:=\mathbb{E}[h(T)\mid X],
\label{eq:olr_model}
\end{aligned}
\end{equation}
where $m:\mathbb{R}^p\to\mathbb{R}$ is a baseline outcome regression model and $g:\mathbb{R}^p\to\mathbb{R}^d$ captures treatment effect heterogeneity.
Under \eqref{eq:olr_model}, the implied uplift is
\begin{equation}
\tau(x;t_1,t_0) = g(x)^\top\!\big(h(t_1)-h(t_0)\big).
\label{eq:cate_closed_form}
\end{equation}
We refer to this as ``factorized'' (or rank-$d$) since the effect depends on $(x,t)$ only through the inner product
between $g(x)$ and $h(t)$.

\paragraph{Learning objective.}
Let $m_\theta$, $g_\eta$, $\phi_\omega$, and $\rho_\nu$ parameterize $m,g,\phi,\rho$, and define $h_{\omega,\nu}$ by
\eqref{eq:z_def}--\eqref{eq:h_def}.
We estimate parameters by empirical risk minimization:
\begin{equation}
\begin{aligned}
\min_{\theta,\eta,\omega,\nu}\;
\frac{1}{n}\sum_{i=1}^n
\ell\!\left(
Y_i,\;
m_\theta(X_i)+g_\eta(X_i)^\top\!\big(h_{\omega,\nu}(T_i)-\hat e_h(X_i)\big)
\right) \\
+\lambda\,\Omega(\theta,\eta,\omega,\nu),
\label{eq:erm_obj}
\end{aligned}
\end{equation}
where $\ell$ is a prediction loss (e.g., squared loss or negative log-likelihood) and $\Omega$ is a regularizer.

\paragraph{Computing $\hat e_h(x)$.}\label{sec:estimation}
A key term in \eqref{eq:olr_model} is
$e_h(x)=\mathbb{E}[h(T)\mid X=x]$, which we approximate by $\hat e_h(x)$. Let the  propensity of treatment $t$ be $\pi_0(t\mid x):=\mathbb{P}(T=t\mid X=x)$. Two cases are considered. 1) \textbf{RCT Study.}
In the common case of complete randomization, treatment assignment is independent of covariates, i.e.,
$T \perp X$. Then $e_h(x)$ is a constant vector: $e_h(x)=\mathbb{E}[h(T)\mid X=x]=\mathbb{E}[h(T)].$ 

Accordingly, we estimate it by the empirical mean of $h(T)$ over the experiment data (or by a running mean in mini-batch SGD, as in Algorithm~\ref{alg:dsin}).
More generally, under stratified or covariate-adaptive randomization, $\pi_0(t\mid x)$ is known and may depend on $x$ through the stratum; in that case we set
$\hat e_h(x)=\sum_{t\in\mathcal{T}} \pi_0(t\mid x)\,h(t).$
2) \textbf{Observational Study.} When treatment assignment is not randomized, $e_h(x)$ generally varies with $x$ and $\pi_0(t\mid x)$ is unknown.
We estimate a propensity model $\hat\pi(t\mid x)\approx \pi_0(t\mid x)$ and plug it in:
$\hat e_h(x)=\sum_{t\in\mathcal{T}} \hat\pi(t\mid x)\,h(t).$
When $|\mathcal{T}|$ is large, the sum can be approximated by sampling treatments from $\hat\pi(\cdot\mid x)$.

\begin{algorithm}[t]
\caption{Permutation - invariant Orthogonal Uplift Learning (POUL)}
\label{alg:dsin}
\begin{algorithmic}[1]
\REQUIRE 
Full data $\mathcal{D}_{\mathrm{all}}=\{(X_i,Y_i)\}$; Experiment data $\mathcal{D}_{\mathrm{exp}}=\{(X_i,T_i,Y_i)\}$;
treatment-side contexts $\mathcal{S}$; atomic actions $\mathcal{A}$;
Policy specification $\{\Pi_t(a\mid s)\}_{t\in\mathcal{T},\,s\in\mathcal{S},\,a\in\mathcal{A}}$;
weights $w(\cdot)$; loss $\ell$; regularizer $\Omega$ and weight $\lambda$; learning rate $\gamma$

\STATE \textbf{Stage 1: fit outcome regression model.}
\STATE Initialize $m_\theta$ and train $m_\theta$ on $\mathcal{D}_{\mathrm{all}}$ by minimizing 
$\frac{1}{|\mathcal{D}_{\mathrm{all}}|}\sum \ell\!\big(m_\theta(X_i),Y_i\big)$
\STATE Freeze $\hat m(\cdot)\leftarrow m_\theta(\cdot)$

\STATE \textbf{Stage 2: train orthogonalized uplift model.}
\STATE Initialize $(\eta,\omega,\nu)$ for $g_\eta$, $\phi_\omega$, $\rho_\nu$
\STATE Initialize running mean $\bar h\in\mathbb{R}^d \leftarrow 0$, counter $c\leftarrow 0$
\FOR{each SGD step}
  \STATE Sample a mini-batch $\mathcal{B}\subset \mathcal{D}_{\mathrm{exp}}$
  \FOR{each $(X_i,T_i,Y_i)\in\mathcal{B}$}
    \STATE Compute treatment embedding (permutation-invariant aggregation):
    
      $z_i \leftarrow \sum_{s\in\mathcal{S}}\sum_{a\in\mathcal{A}} 
      w(s)\,\Pi_{T_i}(a\mid s)\,\phi_\omega(s,a),$\\$
      h_i \leftarrow \rho_\nu(z_i)$
    
    \STATE Compute two nets:
    $
      \hat m_i \leftarrow \hat m(X_i), g_i \leftarrow g_\eta(X_i)
    $
    \STATE Update $\hat e_h(x)$ (replace this term by the corresponding cases described in Sec.~\ref{sec:estimation}):
    $
      \hat e_h(X_i) \leftarrow \bar h
    $
    \STATE Predict:
    $
      \hat Y_i \leftarrow \hat m_i + g_i^\top\!\big(h_i-\hat e_h(X_i)\big)
    $
  \ENDFOR
  \STATE Get loss:
  $
    \mathcal{L}\leftarrow \frac{1}{|\mathcal{B}|}\sum_{(X_i,T_i,Y_i)\in\mathcal{B}} 
    \ell(\hat Y_i,Y_i) + \lambda\,\Omega(\eta,\omega,\nu)
  $
  \STATE SGD update: $(\eta,\omega,\nu)\leftarrow(\eta,\omega,\nu)-\gamma\nabla\mathcal{L}$
  \STATE Update $\bar h$ using current batch embeddings:\\
  $
    \bar h \leftarrow \frac{c\,\bar h + \sum_{(X_i,T_i,Y_i)\in\mathcal{B}} h_i}{c+|\mathcal{B}|},
    \qquad c \leftarrow c+|\mathcal{B}|
  $
\ENDFOR
\STATE \textbf{return} $\hat\tau(x;t_1,t_0)=g_{\hat\eta}(x)^\top\!\big(h_{\hat\omega,\hat\nu}(t_1)-h_{\hat\omega,\hat\nu}(t_0)\big)$
\end{algorithmic}
\end{algorithm}

\paragraph{Training Procedure}\label{subsec:estimation}
We train POUL via a two-stage orthogonal learning procedure (Algorithm~\ref{alg:dsin}). In stage 1 (lines 1-3), we fit a base predictor $m(x)\approx \mathbb{E}[Y\mid X=x]$ using all available data, which can be a DNN network, or Dragonet \citep{shi2019adapting}. This stage leverages the largest possible sample to obtain a stable estimate of the main outcome signal, independent of treatment structure.
After training, we treat $\hat m(\cdot)$ as fixed in the second stage. In stage 2 (lines 4-18), we learn the heterogeneous uplift component on the experiment population where policies are actually assigned and observed. Given a policy $t$, 
we compute its treatment embedding $h(t)$ by aggregating context--action atoms induced by its context-specific \emph{sub-strategies} $\{\Pi_t(\cdot\mid s)\}_{s\in\mathcal{S}}$ and context weights $w(s)$:
we first form the mixture representation $z(t)=\sum_{s\in\mathcal{S}}\sum_{a\in\mathcal{A}} w(s)\Pi_t(a\mid s)\phi(s,a)$ and then apply a learnable map $h(t)=\rho(z(t))$ (line~10).
The aggregation is \emph{permutation-invariant} to any re-indexing or re-ordering of contexts/actions, ensuring that two semantically identical policies (i.e., inducing the same mixture over atoms) have the same representation and preventing spurious dependence on arbitrary IDs or ordering. We fit the heterogeneity function $g(x)$ and the treatment embedding parameters by minimizing an orthogonalized loss based on the residualized outcome $Y-\hat m(X)$ and the residualized treatment representation $h(T)-\hat e_h$. In our main RCT setting with complete randomization, treatment assignment is independent of covariates ($T\perp X$), so we estimate $e_h$ by the empirical mean of $h(T)$ over the experiment population (or an equivalent running estimate during training).
This yields the final uplift estimator
$\hat\tau(x;t_1,t_0)=\hat g(x)^\top\big(\hat h(t_1)-\hat h(t_0)\big)$.

\subsection{Technical Challenges}\label{subsec:practical}

\textbf{Semantic invariance under re-indexing of structured policies.}
In practice, a policy is specified by a collection of context-specific \emph{sub-strategies} $\{\Pi_t(\cdot\mid s)\}_{s\in\mathcal{S}}$, where the indexing and ordering of contexts/actions can be arbitrary and may change across deployments.
A treatment representation that depends on such ordering can introduce spurious differences and hurt robustness when policies are re-serialized, updated, or composed from the same building blocks.
Our mixture-based construction in \eqref{eq:z_def} is permutation-invariant, so the embedding depends only on the induced mixture over context--action atoms, aligning representation with causal semantics.

\textbf{Long-tailed policies, limited overlap, and cold-start variants.}
In large combinatorial treatment spaces, full overlap over complete policies is unrealistic: many policies are rare, newly created, or observed only in a few strata.
Treating each policy as a categorical label leads to high-variance effect estimates and poor generalization in long-tailed regimes.
POUL mitigates this by sharing parameters across reusable context--action components through the shared atom embedding $\phi(s,a)$ and the invariant aggregation in \eqref{eq:z_def}, so estimation depends more on component coverage than repeated exposure to identical full policies.
This parameter sharing enables rare or new policies to borrow statistical strength from frequent ones and improves stability under small policy perturbations.
In practice, we further improve stability using standard techniques such as propensity clipping (for observational data), restricting evaluation to regions with sufficient support, and regularizing the treatment embedding to avoid extrapolation from extremely sparse components.

\textbf{Noisy weights and scalable computation of mixture embeddings.}
Context weights $w(s)$ may be noisy proxies of exposure frequency or importance, and directly computing \eqref{eq:z_def} scales with $|\mathcal{S}||\mathcal{A}|$ per policy.
In practice, $w(\cdot)$ can be estimated from logs or domain knowledge, and we empirically assess robustness under imperfect weights.
For efficiency, we cache $\phi(s,a)$ and exploit sparsity in $\Pi_t(\cdot\mid s)$ to reduce computation, making training feasible at scale.
\section{Theoretical Analysis}
\label{theory}

Combinatorial policies induce high-dimensional treatments: each $t\in\mathcal{T}$ specifies a collection of
context-wise action distributions $\{\Pi_t(\cdot\mid s)\}_{s\in\mathcal{S}}$.
Our analysis answers three questions that are fundamental for making causal inference tractable in this regime:
\begin{enumerate}
\item \textbf{Expressiveness (representation).}
Can the proposed permutation-invariant embedding $h(t)$ capture a broad class of policy effects that depend on $t$
only through its induced context--action mixture? (Theorem~\ref{thm:expressiveness}.)
\item \textbf{Orthogonal robustness (estimation).}
When uplift is learned via the orthogonalized low-rank model \eqref{eq:olr_model},
how do nuisance estimation errors (e.g., baseline and embedding propensity) enter the final CATE error?
We show a \emph{second-order} (product-form) dependence typical of double/debiased ML. (Theorem~\ref{thm:orth_robust}.)
\item \textbf{Stability (generalization).}
If a new/rare policy $t'$ differs only slightly in its specification $\Pi_{t'}(\cdot\mid s)$,
does the learned embedding (and hence predicted uplift) vary smoothly rather than jumping arbitrarily?
We prove a Lipschitz-type stability bound in a natural distance over policy specifications. (Proposition~\ref{prop:stability}.)
\end{enumerate}
Together, these results formalize why \emph{(i)} the embedding is not ad-hoc,
\emph{(ii)} orthogonalization protects uplift learning from nuisance errors,
and \emph{(iii)} the method can generalize across combinatorial treatments beyond those frequently observed. For preparation, we introduce {policy-induced mixtures as a sufficient causal interface}.
Recall the context weights $w(s)\ge 0$ with $\sum_{s\in\mathcal{S}} w(s)=1$ and define
\begin{equation}
\alpha_t(s,a) \;:=\; w(s)\,\Pi_t(a\mid s),\qquad (s,a)\in\mathcal{S}\times\mathcal{A}.
\label{eq:alpha_def}
\end{equation}
The collection $\{\alpha_t(s,a)\}_{s,a}$ is a probability mass function over $\mathcal{S}\times\mathcal{A}$.
We denote by $\mu_t$ the corresponding discrete probability measure on $\mathcal{S}\times\mathcal{A}$, i.e.,
$\mu_t(\{(s,a)\})=\alpha_t(s,a)$.

\begin{assumption}[Policy-induced mixture sufficiency]
\label{ass:mixture_suff}
There exist measurable functions $m_0:\mathbb{R}^p\to\mathbb{R}$,
$g_0:\mathbb{R}^p\to\mathbb{R}^d$,
and a (possibly unknown) functional $F$ mapping measures on $\mathcal{S}\times\mathcal{A}$ to $\mathbb{R}^d$
such that for all $x\in\mathbb{R}^p$ and all $t\in\mathcal{T}$,
\begin{equation}
\mathbb{E}\!\left[Y(t)\mid X=x\right]
\;=\;
m_0(x) + g_0(x)^\top F(\mu_t).
\label{eq:mixture_suff}
\end{equation}
Moreover, $F$ is continuous on $\{\mu_t:t\in\mathcal{T}\}$ under the $\ell_1$/total-variation topology, i.e.,
if $\|\mu-\mu'\|_{1}\to 0$ then $\|F(\mu)-F(\mu')\|\to 0$.
\end{assumption}

Assumption~\ref{ass:mixture_suff} says that a policy affects outcomes only through the \emph{mixture it induces}
over context--action atoms. This matches the reality of many platform interventions.
For example, in ride-hailing marketplaces, a policy may specify (for each spatiotemporal context $s$)
a distribution over incentive actions $a$ (e.g., bonus levels or pricing multipliers).
The realized user experience aggregates over contexts according to exposure weights $w(s)$;
thus it is the induced mixture $\mu_t$---not an arbitrary index ordering of contexts/actions---that
governs the causal response. Also, Assumption~\ref{ass:mixture_suff} implies our orthogonalized model class \eqref{eq:olr_model} is well-specified as follows
(up to reparametrization).

\begin{proposition}[Well-specified orthogonalized low-rank representation]
\label{prop:olr_wellspec}
Under Assumption~\ref{ass:mixture_suff} and consistency $Y=Y(T)$,
define the \emph{oracle} treatment representation $h_0(t):=F(\mu_t)$ and
$e_0(x):=\mathbb{E}[h_0(T)\mid X=x]$.
Then there exists a function $m(\cdot)$ such that the conditional mean satisfies 
\begin{equation}
\begin{aligned}
\mathbb{E}[Y\mid X,T]
&=
m(X) + g_0(X)^\top\!\big(h_0(T)-e_0(X)\big),
\\
m(x)&:=m_0(x)+g_0(x)^\top e_0(x),
\label{eq:olr_wellspec}
\end{aligned}
\end{equation}
and the CATE defined in \eqref{eq:cate_def} obeys
\begin{equation}
\tau(x;t_1,t_0) = g_0(x)^\top\!\big(h_0(t_1)-h_0(t_0)\big).
\label{eq:cate_oracle}
\end{equation}
\end{proposition}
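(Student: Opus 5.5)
The argument is a direct computation, and the one real ingredient is the passage from potential-outcome means to the observed conditional mean $\mathbb{E}[Y\mid X,T]$. Under consistency $Y=Y(T)$ we have, on the event $\{T=t\}$,
\[
\mathbb{E}[Y\mid X=x,T=t]=\mathbb{E}[Y(t)\mid X=x,T=t].
\]
We then invoke conditional randomization from the SUTVA-for-RCT assumption, $Y(t)\perp T\mid X$, and overlap, $\mathbb{P}(T=t\mid X=x)>0$. Together these give $\mathbb{E}[Y(t)\mid X=x,T=t]=\mathbb{E}[Y(t)\mid X=x]$, so the conditioning event has positive probability and the step is well defined. Assumption~\ref{ass:mixture_suff} then yields
\[
\mathbb{E}[Y\mid X=x,T=t]=m_0(x)+g_0(x)^\top F(\mu_t)=m_0(x)+g_0(x)^\top h_0(t),
\]
using the definition $h_0(t):=F(\mu_t)$.

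Next, we add and subtract $g_0(x)^\top e_0(x)$, where $e_0(x)=\mathbb{E}[h_0(T)\mid X=x]$. This quantity is well defined because $\mathcal{T}$ is finite, so $e_0(x)=\sum_t \pi_0(t\mid x)h_0(t)$ is a finite convex combination and no integrability issue arises. The addition gives
\[
\mathbb{E}[Y\mid X,T]=\big(m_0(X)+g_0(X)^\top e_0(X)\big)+g_0(X)^\top\big(h_0(T)-e_0(X)\big).
\]
Setting $m(x):=m_0(x)+g_0(x)^\top e_0(x)$ gives \eqref{eq:olr_wellspec}. As a consistency check, taking $\mathbb{E}[\cdot\mid X]$ of both sides kills the second term by the definition of $e_0$. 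Hence $m(X)=\mathbb{E}[Y\mid X]$, which matches the role of Stage~1 of Algorithm~\ref{alg:dsin}. We will remark on this, although it is not needed for the claim.

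For \eqref{eq:cate_oracle}, we work directly from the definition \eqref{eq:cate_def} and apply Assumption~\ref{ass:mixture_suff} to each potential outcome:
\[
\tau(x;t_1,t_0)=\big(m_0(x)+g_0(x)^\top F(\mu_{t_1})\big)-\big(m_0(x)+g_0(x)^\top F(\mu_{t_0})\big)=g_0(x)^\top\big(h_0(t_1)-h_0(t_0)\big).
\]
This step does not even require identification. Alternatively, differencing \eqref{eq:olr_wellspec} at $T=t_1$ and $T=t_0$ cancels both $m(x)$ and $e_0(x)$ and gives the same expression.

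The only delicate step is the first one: replacing $\mathbb{E}[Y(t)\mid X,T=t]$ by $\mathbb{E}[Y(t)\mid X]$. This is exactly where randomization or unconfoundedness and overlap enter. Under the observational setting the same conclusion holds provided unconfoundedness is assumed, so we will state explicitly that the identification assumption is used here. The phrase ``up to reparametrization'' refers only to absorbing $g_0^\top e_0$ into $m$; with that choice the remainder of the argument is routine algebra.
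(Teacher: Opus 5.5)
Your proposal is correct and follows the paper's proof essentially step for step. Consistency plus randomized assignment replaces $\mathbb{E}[Y\mid X=x,T=t]$ by $\mathbb{E}[Y(t)\mid X=x]$; adding and subtracting $g_0(x)^\top e_0(x)$ defines $m$; and the CATE follows by applying Assumption~\ref{ass:mixture_suff} to each potential outcome. Two of your additions are accurate refinements the paper leaves implicit: you flag that the first step needs randomization and overlap, not just the consistency the statement literally lists, and you note that $m(X)=\mathbb{E}[Y\mid X]$.
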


\noindent
Proposition~\ref{prop:olr_wellspec} clarifies a key boundary:
the orthogonalized decomposition \eqref{eq:olr_model} is \emph{not} an additional restriction beyond
\eqref{eq:mixture_suff}; it is a reparametrization that is convenient for debiased/orthogonal learning.
In particular, the causal estimand depends on $(g_0,h_0)$ but not on the nuisance pair $(m,e_0)$.

\subsection{Expressiveness of the Permutation-Invariant Policy Embedding}
\label{sec:expressive}

We now show that the embedding family \eqref{eq:z_def}--\eqref{eq:h_def} is expressive enough to approximate
any continuous functional $F(\mu_t)$ of the policy-induced mixture.
Given maps $\phi:\mathcal{S}\times\mathcal{A}\to\mathbb{R}^r$ and $\rho:\mathbb{R}^r\to\mathbb{R}^d$, define
\begin{equation}
z_{\phi}(t)=\sum_{s\in\mathcal{S}}\sum_{a\in\mathcal{A}} \alpha_t(s,a)\,\phi(s,a),~
h_{\phi,\rho}(t)=\rho\!\left(z_{\phi}(t)\right).
\label{eq:embedding_class}
\end{equation}
This construction is permutation-invariant to any re-indexing of $\mathcal{S}\times\mathcal{A}$
because it depends on $\{(s,a,\alpha_t(s,a))\}$ only through a weighted sum.

\begin{lemma}[Expressiveness of permutation-invariant embeddings]
\label{thm:expressiveness}
Under Assumption~\ref{ass:mixture_suff}, conditioning that $\mathcal{S}$ and $\mathcal{A}$ are finite and let
$\mathcal{M}:=\{\mu_t:t\in\mathcal{T}\}\subset\Delta(\mathcal{S}\times\mathcal{A})$.
Let $F:\mathcal{M}\to\mathbb{R}^d$ be continuous.
Then for any $\varepsilon>0$, there exist an integer $r$ and functions
$\phi:\mathcal{S}\times\mathcal{A}\to\mathbb{R}^r$ and $\rho:\mathbb{R}^r\to\mathbb{R}^d$
(such that $\rho$ can be implemented by a standard universal approximator, e.g., a ReLU MLP)
for which
\begin{equation}
\sup_{t\in\mathcal{T}}
\big\|
F(\mu_t) - h_{\phi,\rho}(t)
\big\|
\;\le\;
\varepsilon.
\label{eq:expressiveness_bound}
\end{equation}
\end{lemma}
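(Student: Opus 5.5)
The plan is to choose the atom embedding $\phi$ so that $z_\phi(t)$ is an exact, injective, continuous encoding of the mixture $\mu_t$. Then $\rho$ only has to approximate a continuous function on a compact subset of a Euclidean space, which is a standard universal-approximation problem.

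\textbf{Step 1 (exact encoding).} Set $r:=|\mathcal{S}||\mathcal{A}|$ and fix an enumeration $\iota:\mathcal{S}\times\mathcal{A}\to\{1,\dots,r\}$. Define $\phi(s,a):=e_{\iota(s,a)}$, the standard basis vector. By \eqref{eq:embedding_class},
\[
z_\phi(t)=\sum_{s,a}\alpha_t(s,a)\,e_{\iota(s,a)}=\big(\alpha_t(\iota^{-1}(k))\big)_{k=1}^r .
\]
So $z_\phi(t)$ is the probability vector of $\mu_t$, and the map $J:\mu\mapsto z$ is a linear isometry from $(\Delta(\mathcal{S}\times\mathcal{A}),\|\cdot\|_1)$ onto the simplex $\Delta_r\subset\mathbb{R}^r$ equipped with $\ell_1$. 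Permutation invariance is untouched, because the enumeration only fixes coordinates of $\phi$ and $z_\phi$ is still a weighted sum over atoms.

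\textbf{Step 2 (transfer $F$ and extend).} Define $G:=F\circ J^{-1}$ on $K:=J(\mathcal{M})\subset\Delta_r$. Since $J$ is an isometry and $F$ is $\ell_1$-continuous by Assumption~\ref{ass:mixture_suff}, $G$ is continuous on $K$. If $K$ is closed, it is compact. Tietze's extension theorem, applied coordinatewise, then extends $G$ to a continuous $\bar G:\Delta_r\to\mathbb{R}^d$, or to all of $\mathbb{R}^r$. Because $\Delta_r$ is compact, universal approximation for ReLU MLPs gives $\rho$ with $\sup_{z\in\Delta_r}\|\bar G(z)-\rho(z)\|\le\varepsilon$. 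Restricting to $z=z_\phi(t)$ yields $\|F(\mu_t)-h_{\phi,\rho}(t)\|\le\varepsilon$ uniformly in $t$, which is \eqref{eq:expressiveness_bound}.

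\textbf{Main obstacle: compactness of $\mathcal{M}$.} The setup says $\mathcal{T}$ is a finite set, so $\mathcal{M}$ is a finite set of points and therefore compact. In that case the argument is immediate: one could even take $\rho$ to interpolate the finitely many values exactly and then approximate it by an MLP. If instead $\mathcal{T}$ were allowed to be infinite, mere continuity on a non-closed $\mathcal{M}$ does not give a uniform approximation. One would then need to assume $F$ is uniformly continuous on $\mathcal{M}$, or equivalently that it extends continuously to the closure $\overline{\mathcal{M}}$. I would state the proof for finite $\mathcal{T}$, as in the setup, and remark on this extension.

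A secondary point is the dimension $r$. The statement allows $r$ to differ from $d$, and the one-hot choice makes $r=|\mathcal{S}||\mathcal{A}|$. If a smaller $r$ is desired, any injective linear $\phi$ into $\mathbb{R}^{r}$ restricted to the affine hull of the simplex would suffice, but this is not needed for the lemma.
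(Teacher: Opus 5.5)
Your proposal is correct and follows essentially the same route as the paper. Both use the one-hot atom embedding $\phi(s,a)=e_{\iota(s,a)}$ with $r=|\mathcal{S}||\mathcal{A}|$, so that $z_\phi(t)$ is exactly the coordinate vector of $\mu_t$, and then apply universal approximation to the transferred continuous map on a compact subset of the simplex. Your version is slightly more careful than the paper's: the paper just asserts that $\alpha(\mathcal{M})$ is closed, whereas you trace compactness to the finiteness of $\mathcal{T}$ and make the Tietze extension step explicit.
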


Theorem~\ref{thm:expressiveness} justifies the representation choice in \eqref{eq:z_def}--\eqref{eq:h_def}:
\emph{if} the causal response of a policy depends on $t$ only through the induced mixture over context--action atoms,
\emph{then} our embedding family can approximate that dependence arbitrarily well.
In other words, the proposed $h(t)$ is a principled ``policy featurization'' rather than a heuristic encoding.
For example, consider a policy that assigns high incentives in peak Beijing and zero incentives in off-peak Shanghai. Two engineering implementations may store this policy using different context orderings or action identifiers. A categorical policy-ID representation would treat them as unrelated treatments, while our permutation-invariant embedding maps both specifications to the same representation, as they induce the same context–action mixture. Consequently, the implied uplift is identical, as required by causal semantics.

Noteworthy, such a {representation capacity} statement does \emph{not} claim:
(i) causal identification without Assumption~\ref{ass:mixture_suff} and the usual causal conditions (SUTVA/overlap),
(ii) that the required dimension $r$ is always small,
or (iii) that optimization will always find the approximating parameters.
These are separate issues (identification, statistical rates, and optimization) from expressiveness. Also, the skepticism arises that \emph{``If $\mathcal{S}\times\mathcal{A}$ is finite, isn't this just a fancy way of learning $F(\alpha)$ with an MLP?
Why do we need the embedding structure at all?''} It highlights what our contribution \emph{really} is:
we are not merely encoding $\alpha_t$; we are enforcing a \emph{compositional inductive bias}
(shared atom embeddings $\phi(s,a)$ and permutation-invariant aggregation).
This bias is what enables (i) parameter sharing across policies,
(ii) meaningful continuity in policy space (formalized in Proposition~\ref{prop:stability}),
and (iii) practical scaling when $\mathcal{T}$ is huge but $\mathcal{S}\times\mathcal{A}$ has structure.

Having established that the embedding family can represent the causal policy effect under
Assumption~\ref{ass:mixture_suff}, we now turn to estimation: how does orthogonalization
protect uplift learning from nuisance estimation errors?

\subsection{Orthogonal Robustness}
\label{sec:orth_robust}

Let $\pi_0(t\mid x):=\mathbb{P}(T=t\mid X=x)$ be the (known or estimable) assignment/propensity.
Let $h_0$ be the oracle representation from Proposition~\ref{prop:olr_wellspec} and define
$e_0(x):=\mathbb{E}[h_0(T)\mid X=x]=\sum_{t\in\mathcal{T}} \pi_0(t\mid x)\,h_0(t)$.
The orthogonalized regression form \eqref{eq:olr_model} suggests learning $g(\cdot)$ and $h(\cdot)$ from
residualized signals $(Y-m(X))$ and $(h(T)-e_h(X))$.
For theory, we analyze the standard cross-fitted variant that isolates nuisance estimation.
(In experimental design with known $\pi_0$, the propensity step is omitted; the same analysis applies with smaller nuisance error.)
Let $\{I_k\}_{k=1}^K$ be a partition of $\{1,\dots,n\}$.

Define the \emph{Neyman orthogonality} score
$\psi(W; g,h,m,e)
:=
\Big(Y-m(X)-g(X)^\top(h(T)-e(X))\Big)\,(h(T)-e(X))$ where $W:=(X,T,Y).$ We recall its first-order invariance as follows.

\begin{lemma}[Neyman orthogonality w.r.t.\ nuisance $(m,e)$]
\label{lem:neyman_orth}
Assume the model \eqref{eq:olr_model} holds with some $(m_0,g_0,h_0,e_0)$ and
$\mathbb{E}[\varepsilon\mid X,T]=0$, where $e_0(x)=\mathbb{E}[h_0(T)\mid X=x]$.
Then
$\mathbb{E}\big[\psi(W; g_0,h_0,m_0,e_0)\big]=0,$
and the Gateaux derivative of $\mathbb{E}[\psi(W; g_0,h_0,m,e)]$ in the direction of perturbations
$(\delta m,\delta e)$ vanishes at $(m,e)=(m_0,e_0)$:
$\left.\frac{d}{dr}\right|_{r=0}
\mathbb{E}\Big[\psi(W; g_0,h_0,m_0+r\,\delta m,\,e_0+r\,\delta e)\Big]
=0.$
\end{lemma}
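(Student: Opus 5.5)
The plan is a direct computation: substitute the model \eqref{eq:olr_model} into the score $\psi$ and use two conditional-mean-zero identities. The first is $\mathbb{E}[\varepsilon\mid X,T]=0$, which holds by assumption. The second is $\mathbb{E}[h_0(T)-e_0(X)\mid X]=0$, which holds by the definition of $e_0$. Throughout I will assume the moment conditions needed for the expectations to be finite: $\varepsilon$, $h_0(T)$, $g_0(X)$, $\delta m(X)$ and $\delta e(X)$ square-integrable, with $g_0(X)$ bounded or the relevant products integrable.

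\emph{Step 1 (unbiasedness).} At the truth, $Y-m_0(X)-g_0(X)^\top(h_0(T)-e_0(X))=\varepsilon$, so $\psi(W;g_0,h_0,m_0,e_0)=\varepsilon\,(h_0(T)-e_0(X))$. Conditioning on $(X,T)$ gives
\[
\mathbb{E}[\psi]=\mathbb{E}\big[\mathbb{E}[\varepsilon\mid X,T]\,(h_0(T)-e_0(X))\big]=0,
\]
since $h_0(T)-e_0(X)$ is $(X,T)$-measurable.

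\emph{Step 2 (expansion in $r$).} Write $D:=h_0(T)-e_0(X)$ and $c(X):=-\delta m(X)+g_0(X)^\top\delta e(X)$. Then
\[
\psi(W;g_0,h_0,m_0+r\delta m,e_0+r\delta e)=\big(\varepsilon+r\,c(X)\big)\big(D-r\,\delta e(X)\big).
\]
This is a polynomial of degree two in $r$ with integrable coefficients. Its expectation is therefore exactly
\[
\mathbb{E}[\varepsilon D]+r\big(\mathbb{E}[c(X)D]-\mathbb{E}[\varepsilon\,\delta e(X)]\big)-r^2\,\mathbb{E}[c(X)\delta e(X)].
\]
This sidesteps any need to justify exchanging differentiation and expectation. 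The Gateaux derivative at $r=0$ is the linear coefficient.

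\emph{Step 3 (vanishing of the linear term).} The term $\mathbb{E}[\varepsilon\,\delta e(X)]$ is zero by the tower property with $\mathbb{E}[\varepsilon\mid X,T]=0$. The term $\mathbb{E}[c(X)D]$ equals $\mathbb{E}\big[c(X)\,\mathbb{E}[h_0(T)-e_0(X)\mid X]\big]$, and this is zero by the definition $e_0(x)=\mathbb{E}[h_0(T)\mid X=x]$. Here it matters that $c$ depends only on $X$: the perturbations $\delta m$ and $\delta e$ are functions of $X$ alone, and $g_0$ is held fixed.

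The only real obstacle is the integrability bookkeeping that makes the expansion in Step 2 legitimate, which the moment conditions above handle. A useful byproduct is the surviving quadratic term $-r^2\,\mathbb{E}[(g_0^\top\delta e-\delta m)\,\delta e]$, a product of the two nuisance errors. This is precisely the second-order, product-form dependence that will feed into Theorem~\ref{thm:orth_robust}.
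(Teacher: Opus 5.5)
Your proposal is correct and follows the paper's proof essentially step for step: the same residual expansion $(\varepsilon - r\,\delta m + r\,g_0^\top\delta e)\big((h_0(T)-e_0(X)) - r\,\delta e\big)$, with the linear coefficient killed by the two identities $\mathbb{E}[\varepsilon\mid X,T]=0$ and $\mathbb{E}[h_0(T)-e_0(X)\mid X]=0$. The one small difference is that you observe the expectation is exactly a quadratic polynomial in $r$. This makes the paper's appeal to dominated convergence for the $O(r^2)$ remainder unnecessary.
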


Lemma~\ref{lem:neyman_orth} is the formal statement of
``nuisance errors do not affect the target to first order.''
It is exactly the property exploited by debiased/double machine learning. Additionally, considering its second-order, we now state a pointwise robustness guarantee in the spirit of DML error expansions.
Below, for a (vector-valued) function $f$ we use $\|f\|_{2}:=(\mathbb{E}\|f(W)\|^2)^{1/2}$,
and for $h(\cdot)$ we use $\|h-\tilde h\|_{\infty}:=\sup_{t\in\mathcal{T}}\|h(t)-\tilde h(t)\|$.

\begin{assumption}[Regularity for orthogonal robustness]
\label{ass:regularity_orth}
\leavevmode
\begin{enumerate}
\item (Bounded moments) $\mathbb{E}[Y^2]<\infty$ and $\sup_{t\in\mathcal{T}}\|h_0(t)\|<\infty$,
and $\mathbb{E}[\|g_0(X)\|^2]<\infty$.
\item (Overlap) There exists $c>0$ s.t.\ $\pi_0(t\mid x)\ge c$ for all $t\in\mathcal{T}$ and a.e.\ $x$.
\item (Cross-fitting) Nuisance estimates $(\hat m^{(-k)},\hat\pi^{(-k)})$ are fit on data independent of fold $I_k$.
\item (Estimation errors) Define
$\delta_m:=\|\hat m-m_0\|_2$,
$\delta_e:=\|\hat e-e_0\|_2$,
$\delta_g(x):=\|\hat g(x)-g_0(x)\|$,
and $\delta_h:=\|\hat h-h_0\|_{\infty}$.
Assume these quantities are $o_p(1)$.
\end{enumerate}
\end{assumption}

\begin{theorem}[Orthogonal robustness: nuisance errors enter at second order]
\label{thm:orth_robust}
Suppose Assumption~\ref{ass:regularity_orth} holds.
Let $\hat\tau(x;t_1,t_0)=\hat g(x)^\top(\hat h(t_1)-\hat h(t_0))$ be the estimator returned by
Algorithm~\ref{alg:dsin}. Then for any fixed $x$ and $(t_1,t_0)$,
$\hat\tau(x;t_1,t_0)-\tau(x;t_1,t_0)=\textbf{error(g)+\textbf{error(h)}+error(second order)}$, where
$\textbf{error(g)}:=(\hat g(x)-g_0(x))^\top\!\big(h_0(t_1)-h_0(t_0)\big)$, $\textbf{error(h)}:=g_0(x)^\top\!\Big((\hat h-h_0)(t_1)-(\hat h-h_0)(t_0)\Big)$ and $\textbf{error(second order)}:= O_p\Big( (\delta_m+\delta_h+\sup_{x}\delta_g(x))\,\delta_e \Big)$. In particular, the nuisance pair $(m,e)$ affects the CATE only through the product term
$(\delta_m+\delta_h+\sup_x\delta_g(x))\delta_e$, i.e., there is no first-order (linear) sensitivity to $\delta_m$ or $\delta_e$.
\end{theorem}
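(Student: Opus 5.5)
The plan has two parts. First, split $\hat\tau-\tau$ algebraically into the two displayed first-order terms plus a cross term. Second, show that the way $\hat g,\hat h$ are produced by the orthogonalized objective lets $(\hat m,\hat e)$ enter only through products. Throughout, $\tau(x;t_1,t_0)=g_0(x)^\top(h_0(t_1)-h_0(t_0))$ by Proposition~\ref{prop:olr_wellspec}.

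\textbf{Step 1 (exact decomposition).} Write $\hat g=g_0+(\hat g-g_0)$ and $\hat h=h_0+(\hat h-h_0)$. Expanding the bilinear form $\hat g(x)^\top(\hat h(t_1)-\hat h(t_0))$ gives three pieces:
\[
\hat\tau-\tau=\textbf{error(g)}+\textbf{error(h)}+(\hat g(x)-g_0(x))^\top\big((\hat h-h_0)(t_1)-(\hat h-h_0)(t_0)\big).
\]
The last piece is bounded by $2\,\delta_g(x)\,\delta_h$, which is $O_p(\sup_x\delta_g(x)\,\delta_h)$.

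\textbf{Step 2 (where the nuisances enter).} The estimates $(\hat g,\hat h)$ come from the cross-fitted empirical version of the moment $\mathbb{E}[\psi(W;g,h,\hat m,\hat e)]=0$, i.e.\ the first-order condition of the squared-loss objective \eqref{eq:erm_obj}. The question is how much this moment moves when $(m_0,e_0)$ is replaced by $(\hat m,\hat e)$.

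Lemma~\ref{lem:neyman_orth} gives that the Gateaux derivative vanishes at the truth, so I compute the remainder explicitly. Set $\Delta_m=\hat m-m_0$, $\Delta_e=\hat e-e_0$ and $R=Y-m_0-g^\top(h(T)-e_0)$. Expanding,
\[
\psi(W;g,h,\hat m,\hat e)-\psi(W;g,h,m_0,e_0)=(-\Delta_m+g^\top\Delta_e)(h(T)-e_0)-R\,\Delta_e+(\Delta_m-g^\top\Delta_e)\Delta_e.
\]
Consider first $(g,h)=(g_0,h_0)$:
\begin{itemize}
\item The two linear terms have zero mean, using $\mathbb{E}[h_0(T)-e_0(X)\mid X]=0$ and $\mathbb{E}[\varepsilon\mid X,T]=0$. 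This uses cross-fitting, so $\Delta_m,\Delta_e$ can be treated as fixed functions of $X$ on fold $I_k$.
\item The quadratic term is bounded by Cauchy--Schwarz by $\delta_m\delta_e+\|g_0\|\,\delta_e^2$.
\end{itemize}
At $(\hat g,\hat h)$ instead of $(g_0,h_0)$, the first term picks up $\mathbb{E}[(\hat h-h_0)(T)\mid X]$ and $(\hat g-g_0)$ deviations. Each is multiplied by $\Delta_m$ or $\Delta_e$, giving contributions of order $(\delta_h+\sup_x\delta_g)\,\delta_e$ and $\delta_h\delta_m$. Terms of the second type are absorbed once $\delta_m\delta_e$-type products dominate, which I will argue via the $o_p(1)$ rates. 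Boundedness of $h_0$ and the moment conditions in Assumption~\ref{ass:regularity_orth} keep all constants finite.

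\textbf{Step 3 (transfer to $\hat\tau$).} The perturbation of the estimating equation caused by the nuisances is therefore $O_p((\delta_m+\delta_h+\sup_x\delta_g)\delta_e)$. I then linearize the moment in $(g,h)$ around $(g_0,h_0)$. Its Jacobian in $g$ is the conditional covariance $\mathrm{Var}(h_0(T)\mid X)$, which is invertible by overlap ($\pi_0\ge c$) as long as $h_0$ is nondegenerate. Inverting, the part of $(\hat g-g_0,\hat h-h_0)$ attributable to $(\hat m,\hat e)$ is of the same product order. The remaining oracle-fitting error is exactly what \textbf{error(g)} and \textbf{error(h)} record. Adding Step 1's cross term gives the stated expansion.

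\textbf{Main obstacle.} Step 3 is the hardest. $(\hat g,\hat h)$ are not a finite-dimensional M-estimator, $h$ is only identified up to an invertible linear map in the low-rank model, and $\hat\tau$ is evaluated at a fixed $x$. I would first fix the parametrization by normalizing $h$ (for instance, $\mathrm{Var}(h_0(T))=I$). I would then state the result in the interpretive DML form: given the realized $\hat g,\hat h$, the nuisance-induced bias of the estimating equation is product-form, and pointwise control follows by conditioning on $X=x$ and using the local invertibility of the conditional covariance.
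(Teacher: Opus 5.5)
Your Step~1 is exactly the paper's Step~1. Your Step~2 is the content of the paper's score-perturbation lemma (Lemma~\ref{lem:score_perturb}), and your Step~3 spells out the score-to-$(\hat g,\hat h)$ transfer that the paper only asserts. The argument nonetheless has a gap that no work in Steps~2--3 can close.

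Because \textbf{error(g)} and \textbf{error(h)} are defined with the \emph{realized} $\hat g,\hat h$, Step~1 is an exact identity. So the term that must be $O_p((\delta_m+\delta_h+\sup_x\delta_g(x))\delta_e)$ is precisely
\[
R_{gh}=(\hat g(x)-g_0(x))^\top\big((\hat h-h_0)(t_1)-(\hat h-h_0)(t_0)\big).
\]
This term involves neither $\hat m$ nor $\hat e$, and it is only $O_p(\delta_g(x)\delta_h)$. Your line ``adding Step~1's cross term gives the stated expansion'' is where the proof breaks, because $\delta_g\delta_h$ carries no factor of $\delta_e$.

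Concretely, take an RCT with two equally likely policies, known $\pi_0$, and $\hat e=\sum_t\pi_0(t)\hat h(t)$ as in Algorithm~\ref{alg:dsin}. Let $\hat h-h_0=\pm v$ on the two policies and $\hat g-g_0=u$, with $u,v$ scaled by $n^{-1/4}$ so that Assumption~\ref{ass:regularity_orth}(4) holds. Then $\delta_e=0$, yet $R_{gh}=2u^\top v\neq0$. So the statement as written cannot be derived from the listed hypotheses.

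Your Step~3 sentence that \textbf{error(g)}, \textbf{error(h)} record only the ``oracle-fitting error'' contradicts their definitions. Any nuisance-induced part of $\hat g-g_0$ already sits inside \textbf{error(g)}. A provable version must do three things:
\begin{enumerate}
\item define the first-order terms relative to oracle-nuisance fits $(\tilde g,\tilde h)$ trained with $(m_0,e_0)$;
\item bound $\hat\tau-\tilde\tau$ by the nuisance product;
\item keep $\delta_g\delta_h$ as a separate remainder.
\end{enumerate}
The paper's proof has the same hole. It bounds the score perturbation, asserts without argument that this bound transfers to the target, and then ``combines'' with $R_{gh}$ without reconciling the two.

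Two further steps also fail as written. In Step~2 you correctly isolate the term $-\mathbb{E}[\Delta_m(X)\,\Delta h(T)]$ of order $\delta_m\delta_h$. The paper's lemma drops this term, and its stated bound is violated by $\Delta e=0$, $\Delta h\equiv c$, $\mathbb{E}\Delta_m\neq 0$. However, the term cannot be ``absorbed via the $o_p(1)$ rates'': all errors being $o_p(1)$ gives no comparison between $\delta_m\delta_h$ and $(\delta_m+\delta_h+\sup_x\delta_g)\delta_e$. In the example just given, the right-hand side is zero while the term is not. The missing idea is structural. Let $e_{\hat h}(X):=\mathbb{E}[\hat h(T)\mid X]$; then $\mathbb{E}[\Delta_m(X)\Delta h(T)]=\mathbb{E}[\Delta_m(X)(e_{\hat h}-e_0)(X)]$. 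This equals $\mathbb{E}[\Delta_m\Delta_e]$ when $\hat e$ is built from $\hat h$ with the true propensity. Otherwise it leaves $\mathbb{E}[\Delta_m(\hat e-e_{\hat h})]$, a product of outcome and propensity errors rather than a $\delta_e$ term.

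Step~3 needs assumptions that Assumption~\ref{ass:regularity_orth} does not supply:
\begin{enumerate}
\item Invertibility of $\mathrm{Var}(h_0(T)\mid X)$ does not follow from overlap; you need $\{h_0(t)\}$ to affinely span $\mathbb{R}^d$.
\item $(g,h)$ is identified only up to $(A^{-\top}g,Ah)$, so $\delta_g$ and $\delta_h$ need a normalization fixed in the assumptions.
\item An $L_2$ moment bound does not give control at a fixed $x$ by ``conditioning on $X=x$''.
\end{enumerate}
Finally, the step $\mathbb{E}[\|g_0\|\,\|\Delta_e\|^2]\le C\delta_e^2$ needs $g_0$ bounded, not merely square-integrable.
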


Theorem~\ref{thm:orth_robust} explains why orthogonalization is essential in large-scale policy learning:
even when baseline regression $m$ and embedding propensity $e$ are estimated with complex ML models,
their errors do not linearly bias the uplift estimator; they only contribute as a product (second-order term).
This is precisely the regime where DML is powerful: nuisance learners can be high-capacity without destabilizing the target.

As a concrete special case, if $\delta_m=o_p(n^{-1/4})$ and $\delta_e=o_p(n^{-1/4})$,
then $\delta_m\delta_e=o_p(n^{-1/2})$ and the nuisance contribution becomes negligible compared with typical
$1/\sqrt{n}$ sampling fluctuations (mirroring classical DML conditions).

Expressiveness (Theorem~\ref{thm:expressiveness}) and orthogonal robustness (Theorem~\ref{thm:orth_robust})
address \emph{what} we can represent and \emph{how} we can estimate it robustly.
We next address \emph{why} this representation supports generalization across rare/unseen combinatorial policies:
small changes in policy specification lead to small changes in $h(t)$ and thus in uplift.

\subsection{Stability Under Policy Perturbations}
\label{sec:stability}

\paragraph{A natural metric on policy specifications.}
Define the weighted $\ell_1$ distance between policies by
\begin{equation}
d_{\Pi}(t,t')
:=
\sum_{s\in\mathcal{S}} w(s)\,\big\|\Pi_t(\cdot\mid s)-\Pi_{t'}(\cdot\mid s)\big\|_{1}.
\label{eq:policy_distance}
\end{equation}
This distance measures how much the context-wise action distributions change, averaged by context exposure $w(s)$.
Note that $d_{\Pi}(t,t')=\|\mu_t-\mu_{t'}\|_{1}$ under the identification $\mu_t(s,a)=w(s)\Pi_t(a\mid s)$.

\begin{proposition}[Stability of the embedding and uplift]
\label{prop:stability}
Assume $\rho$ is $L_\rho$-Lipschitz, i.e., $\|\rho(u)-\rho(v)\|\le L_\rho\|u-v\|$ for all $u,v$,
and the atom embedding is bounded: $\sup_{(s,a)}\|\phi(s,a)\|\le B$.
Then for any $t,t'\in\mathcal{T}$,
\begin{equation}
\|h(t)-h(t')\|
\;\le\;
L_\rho\,\|z(t)-z(t')\|
\;\le\;
L_\rho\,B\, d_{\Pi}(t,t').
\label{eq:embedding_stability}
\end{equation}
If moreover $\|g_0(x)\|\le G$ for the $x$ of interest, then the (oracle) uplift satisfies 
$
\big|\tau(x;t,t')\big|
=
\big|g_0(x)^\top(h_0(t)-h_0(t'))\big|
\;\le\;
G\,\|h_0(t)-h_0(t')\|
\;\le\;
G\,L_\rho\,B\, d_{\Pi}(t,t').
$
\end{proposition}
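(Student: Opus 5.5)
The plan is a direct chain of elementary inequalities with no appeal to earlier results beyond the definitions \eqref{eq:z_def}, \eqref{eq:h_def}, \eqref{eq:alpha_def} and \eqref{eq:policy_distance}. Since $h(t)=\rho(z(t))$ and $h(t')=\rho(z(t'))$, the $L_\rho$-Lipschitz assumption on $\rho$ gives the first inequality in \eqref{eq:embedding_stability} immediately.

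For the second inequality, write the difference of mixture representations as a single weighted sum,
\begin{equation*}
z(t)-z(t')=\sum_{s\in\mathcal{S}}\sum_{a\in\mathcal{A}} w(s)\big(\Pi_t(a\mid s)-\Pi_{t'}(a\mid s)\big)\,\phi(s,a).
\end{equation*}
Apply the triangle inequality and pull out $\sup_{(s,a)}\|\phi(s,a)\|\le B$, using $w(s)\ge 0$ to drop absolute values on the weights. This gives
\begin{equation*}
\|z(t)-z(t')\|\le B\sum_{s} w(s)\sum_a|\Pi_t(a\mid s)-\Pi_{t'}(a\mid s)|=B\,d_\Pi(t,t').
\end{equation*}
Combining with the first step yields \eqref{eq:embedding_stability}. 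Finiteness of $\mathcal{S}$ and $\mathcal{A}$ ensures that all sums are finite, so no convergence issues arise.

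For the uplift bound, first invoke Proposition~\ref{prop:olr_wellspec} to write $\tau(x;t,t')=g_0(x)^\top(h_0(t)-h_0(t'))$. Cauchy--Schwarz with $\|g_0(x)\|\le G$ then gives the bound $G\|h_0(t)-h_0(t')\|$. Finally, apply \eqref{eq:embedding_stability} to $h_0$.

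The only real obstacle is this last application. It requires the oracle $h_0=F(\mu_\cdot)$ to have the form $\rho(z(\cdot))$ with a Lipschitz $\rho$ and a bounded $\phi$, and the statement tacitly identifies $h_0$ with such an embedding. I would make this explicit: either assume $h_0=h_{\phi,\rho}$ for the given $(\phi,\rho)$ (the representation assumed to be realized, as in Lemma~\ref{thm:expressiveness}), or, more generally, note that the same argument applies verbatim to any $h=\rho\circ z_\phi$ with these constants. If $h_0$ is only approximated within $\varepsilon$ in sup norm, the bound acquires an additive $2G\varepsilon$ term. That slack is not in the stated bound, so the cleanest route is to state the identification as an assumption.
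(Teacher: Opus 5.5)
Your proof is correct and follows the paper's argument step for step: the Lipschitz property of $\rho$, then the triangle inequality on the weighted sum with $\|\phi\|\le B$ to get $B\,d_\Pi(t,t')$, then Cauchy--Schwarz, then reapplying the embedding bound to $h_0$. Your caveat is well taken: the paper's proof also applies the bound to $h_0$ without stating that $h_0=\rho\circ z_\phi$ with the same $L_\rho$ and $B$, so making this identification an explicit assumption is the right fix.
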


Proposition~\ref{prop:stability} turns an empirical phenomenon into a principled statement:
in combinatorial treatment problems, \emph{treatments are not isolated labels};
they come with a structured specification $\Pi_t$.
Our embedding respects this structure and yields a continuity guarantee:
if a new policy $t'$ differs from a known policy $t$ only slightly (in $d_\Pi$),
then both the learned representation and the implied uplift change only slightly.
This provides a theoretical explanation for why the method can extrapolate to rare/unseen policies that are
\emph{nearby in specification space}.

{A common misconception} is that \emph{``Any neural network can be made Lipschitz; isn't this trivial?''} The nontrivial part is \emph{which input space the Lipschitzness is defined on}.
Here it is defined on the \emph{policy specification distance} $d_\Pi$,
which is meaningful for combinatorial policies and unavailable to methods that treat $t$ as a categorical ID.
A categorical-ID model has no reason to behave smoothly as a policy is perturbed,
because ``nearby'' has no definition in that representation.

Concluding above, Assumption~\ref{ass:mixture_suff} provides a structural interface: policies act through induced context--action mixtures.
Theorem~\ref{thm:expressiveness} then guarantees that our permutation-invariant embedding family is rich enough to
approximate the corresponding causal functional $F(\mu_t)$.
Given such an embedding, Theorem~\ref{thm:orth_robust} shows that orthogonalized learning yields a CATE estimator
whose dependence on nuisance errors is second order, aligning with the core promise of double machine learning.
Finally, Proposition~\ref{prop:stability} explains how the representation supports generalization across
combinatorial policies: small specification perturbations translate to controlled changes in embeddings and uplift.

\section{Experiments}\label{exp}
In this section, we conduct comprehensive experiments to validate the effectiveness of our proposed POUL framework. Through rigorous evaluations on large-scale industrial datasets, we demonstrate that POUL achieves superior Individual Treatment Effect (ITE) estimation and enhanced robustness against selection bias compared to competitive baselines. All comprehensive implementation details and experimental setups are provided in Appendix~\ref{sec:exp_details}.

\begin{table*}[t]
\centering
\caption{Combined Estimation Performance. Comparison of AUUC ($\uparrow$) and MAPE ($\downarrow$) for both GMV (Panel A) and Cost (Panel B) estimation tasks. The results are reported across Global ($\mathcal{D}_{G}$) and Core ($\mathcal{D}_{C}$) populations. Best results are \textbf{bolded}.}
\label{tab:combined_performance_wide}
\begin{small}
\begin{sc}
\begin{tabular}{l|cccc|cccc}
\toprule
Dataset & \multicolumn{4}{c|}{Global Population ($\mathcal{D}_{G}$)} & \multicolumn{4}{c}{Core Population ($\mathcal{D}_{C}$)} \\
\midrule
\multirow{2}{*}{Metrics} & \multicolumn{2}{c}{Treatment 1 ($T_1$)} & \multicolumn{2}{c|}{Treatment 2 ($T_2$)} & \multicolumn{2}{c}{Treatment 1 ($T_1$)} & \multicolumn{2}{c}{Treatment 2 ($T_2$)} \\
& AUUC $\uparrow$ & MAPE $\downarrow$ & AUUC $\uparrow$ & MAPE $\downarrow$ & AUUC $\uparrow$ & MAPE $\downarrow$ & AUUC $\uparrow$ & MAPE $\downarrow$ \\
\midrule
\multicolumn{9}{c}{\textbf{Panel A: GMV (LTV) Estimation}} \\
\midrule
Random & 0.500 & -- & 0.500 & -- & 0.500 & -- & 0.500 & -- \\
Causal Forest & 0.611 & 1.000 & 0.632 & 1.850 & 0.536 & 1.086 & 0.536 & 0.836 \\
DESCN  & 0.514 & 2.156 & 0.597 & \textbf{0.977} & 0.596 & 1.256 & 0.591 & 1.167 \\
DRCFR  & 0.660 & \textbf{0.779} & 0.612 & 3.389 & 0.622 & \textbf{0.888} & 0.566 & \textbf{0.713} \\
T-Learner    & 0.581 & 3.310 & \textbf{0.639} & 2.711 & 0.585 & 2.487 & 0.563 & 1.426 \\
POUL (Ours)   & \textbf{0.826} & 1.333 & 0.635 & 1.386 & \textbf{0.792} & 1.570 & \textbf{0.690} & 2.023 \\
\midrule
\multicolumn{9}{c}{\textbf{Panel B: Cost Estimation}} \\
\midrule
Random & 0.500 & -- & 0.500 & -- & 0.500 & -- & 0.500 & -- \\
Causal Forest & 0.692 & 0.792 & 0.697 & 0.881 & 0.576 & 0.946 & 0.596 & 0.969 \\
DESCN  & 0.735 & \textbf{0.113} & 0.777 & 0.409 & 0.608 & \textbf{0.103} & 0.641 & 0.456 \\
DRCFR  & 0.715 & 0.364 & 0.753 & 0.607 & 0.586 & 0.177 & 0.641 & \textbf{0.344} \\
T-Learner    & \textbf{0.739} & 0.221 & 0.777 & \textbf{0.316} & 0.616 & 0.191 & 0.645 & 0.445 \\
POUL (Ours)   & 0.735 & 0.838 & \textbf{0.917} & 0.755 & \textbf{0.736} & 0.844 & \textbf{0.916} & 0.769 \\
\bottomrule
\end{tabular}
\end{sc}
\end{small}
\end{table*}

\subsection{Experimental Design}
\paragraph{Datasets.} We evaluate our framework on a large-scale ride-hailing dataset with user features ($p=723$). The analysis spans two population scopes: a \textit{Global} pool of 3,820,637 users and a \textit{Core Eligible} subset of 897,946 users satisfying the inclusion criteria for service class upgrades. The treatment space comprises $T_1$ (\textit{Economy-to-Express}) and a composite $T_2$ (\textit{Discount-to-Economy} + \textit{Economy-to-Express}). Crucially, treatment triggering is contingent upon the intersection of user opt-ins and specific operational scenario constraints. Upon satisfying these conditions, valid \textit{Economy-to-Express} requests are stochastically assigned to $T_1$ or $T_2$, whereas \textit{Discount-to-Economy} requests exclusively trigger $T_2$, introducing structural dependencies between user intent, environmental context, and treatment assignments. Furthermore, we simultaneously model two distinct outcomes. \textit{Monthly Gross Merchandise Value} ($GMV$) serves as an observable monthly surrogate for the user's \textit{Lifetime Value} ($LTV$), while \textit{Monthly Cost} ($Cost$) explicitly quantifies the monthly operational expenditure incurred by the class upgrade treatment.

\paragraph{Baselines.} We compare our proposed method against representative baselines, including the non-parametric \textbf{Causal Forest}~\cite{wager2018estimation}, and three deep learning methods implemented with an \textbf{MLP} backbone: \textbf{T-Learner}~\citep{kunzel2019metalearners} (implemented via distinct DNN heads to separately evaluate and optimize the regression loss for the treatment and control groups), \textbf{DESCN}~\citep{zhong2022descn}, and \textbf{DRCFR}~\citep{cheng2022learning}.

\paragraph{Evaluation Metrics. }
To evaluate the effectiveness of POUL, we employ two standard metrics on a randomized test set: normalized \textbf{Area Under the Uplift Curve (AUUC)} to assess uplift ranking capability, and \textbf{Mean Absolute Percentage Error (MAPE)} to quantify ITE estimation accuracy. We refer details to Appendix~\ref{sec:appendix_metrics}.

\subsection{Experiment results}
We optimize hyperparameters via grid search and obtain the following optimal results in Table~\ref{tab:combined_performance_wide}. For further details, please refer to Appendix~\ref{sec:results_details}. As shown in Table~\ref{tab:combined_performance_wide}, POUL achieves state-of-the-art performance across both Global ($\mathcal{D}_{G}$) and Core ($\mathcal{D}_{C}$) populations, surpassing baselines by \textbf{25.2\%} in the high-variance GMV task ($T_1$).
Crucially, our model exhibits dominant superiority in the combinatorial strategy setting ($T_2$), achieving an AUUC of \textbf{0.916} in Cost estimation and exceeding baseline by \textbf{42.0\%} in the Core population.
This empirical gap directly corroborates our theoretical analysis: unlike traditional baselines, POUL represents the policy as an \textit{induced mixture over context--action atoms}, guaranteeing stability under perturbations and enabling the capture of complex policy interactions in $T_2$.

\section{Conclusion}\label{conclu}

This work studies uplift estimation for combinatorial policies, representing treatments via the mixtures they induce over context--action components. We establish the expressiveness of permutation-invariant policy embeddings, the orthogonal robustness of the resulting low-rank estimator, and stability under policy perturbations. These results confirm that respecting policy structure and invariance is crucial for reliable causal learning in large, evolving treatment spaces.

Future work should extend the framework both theoretically and practically. Theoretical analysis could address settings with unobserved confounding, non-stationarity, or evolving context--action sets. Methodologically, integration with meta-learning, Bayesian inference, or sequential decision-making models could enhance adaptability and uncertainty quantification. Applying the approach to broader domains---such as dynamic pricing or personalized healthcare---and developing standardized benchmarks would further validate its utility and robustness.

In all, combining causality with structured representation learning offers a principled path toward generalizable and interpretable decisions. Further exploration of its connections to causal discovery, programmatic policies, partial identification, and reinforcement learning represents a promising and crucial frontier for intelligent decision-making.

\newpage

\bibliography{example_paper}
\bibliographystyle{conf2026}

\clearpage
\appendix
\onecolumn

\section{Related Work}
\label{related}

\textbf{CATE and uplift estimation.}
A large literature studies heterogeneous treatment effect estimation under the potential outcomes framework, with both classical and modern ML approaches
\citep{rubin1974estimating, imbens2015causal, athey2016recursive, wager2018estimation}.
For binary treatments, meta-learners such as the T-/S-/X-learners provide practical plug-in strategies and are widely used in industrial uplift modeling
\citep{kunzel2019metalearners}.
In parallel, uplift modeling has developed evaluation protocols centered around incremental gain and Qini-type curves, as well as specialized tree learners for direct uplift estimation
\citep{radcliffe2011real, rzepakowski2012decision}.
Despite substantial progress, most of these methods implicitly assume a \emph{small} and \emph{fixed} treatment set.
When the treatment space becomes large and long-tailed—as in combinatorial policies—treating each treatment as an unrelated category leads to poor sample efficiency and unstable ranking, because there is no mechanism to share information across related interventions.

\textbf{Orthogonalization, doubly robust learning, and debiased ML.}
Orthogonal / doubly robust ideas are a cornerstone of modern causal estimation, mitigating sensitivity to nuisance estimation through Neyman orthogonality and cross-fitting
\citep{robins1994estimation, vanderlaan2011targeted, chernozhukov2018double, kennedy2020optimal}.
These works provide a principled recipe: learn flexible nuisance components (outcome regression, propensity) while guaranteeing that their errors enter the target estimator only at second order.
However, the majority of the DML/TMLE literature treats the treatment as low-dimensional (binary or small discrete), and the nuisance objects are designed accordingly.
In our setting, the primary bottleneck is not only nuisance estimation, but also \emph{how to represent a high-cardinality, structured policy} in a way that preserves causal semantics.
Our contribution is complementary: we retain the orthogonal robustness advantages of DML-style learning \citep{chernozhukov2018double} while introducing a structured, permutation-invariant treatment representation that enables parameter sharing and stable generalization across a combinatorial policy space.

\textbf{Representation learning for treatment effects.}
Deep representation learning has been widely adopted for CATE/ITE estimation, often aiming to reduce distribution shift between treated and control groups in observational data.
Notable examples include TARNet and its balancing variants \citep{shalit2017estimating}, as well as architectures that explicitly incorporate propensity information such as DragonNet and targeted regularization
\citep{shi2019adapting}.
Extensions to multi-treatment and dose-response settings have also been developed, e.g., DRNet for multiple treatments with continuous dosage \citep{schwab2019learning}.
These methods primarily focus on learning a good representation of \emph{unit covariates} $X$ (and nuisance functions of $X$) to improve counterfactual generalization.
By contrast, our focus is orthogonal: we study how to represent \emph{structured treatments} (policies) themselves.
In particular, even with randomized assignment (where ignorability is ensured), naive treatment encodings can destroy equivalences between policies and prevent information sharing across related policies, leading to systematic instability in long-tailed regimes.
Our permutation-invariant treatment embedding targets precisely this failure mode.\\

\textbf{Structured and compositional treatments.}

{There is increasing interest in causal estimation under nonstandard treatments, including multiple treatments, continuous treatments, and structured interventions \citep{kunzel2019metalearners,schwab2019learning}.
In many real systems, a ``treatment'' is naturally a \textit{policy} mapping treatment-side contexts to action distributions, inducing a compositional structure.
Most existing multi-treatment architectures model treatments via one-hot or learned embeddings tied to treatment IDs.
While recent approaches like \textbf{EFIN} \citep{liu2023explicit} mitigate this by explicitly encoding treatment attributes (e.g., coupon values) to capture inter-treatment correlations, they fundamentally rely on fixed feature interactions rather than enforcing invariances implied by policy semantics.
As a result, two implementations of the same policy (differing only by re-indexing or ordering) may still be treated as unrelated, and nearby policies in specification space need not be nearby in representation space.
Our work addresses this gap by representing a policy through its induced mixture over context--action atoms and by proving stability under small policy perturbations, which is absent from categorical-ID and feature-interaction paradigms.}

\textbf{Permutation invariance and symmetry-aware learning.}
Permutation-invariant architectures such as Deep Sets \citep{zaheer2017deepsets} and more general symmetry-aware frameworks \citep{kondor2018generalization} provide a powerful inductive bias when the target depends only on a multiset of components.
While such invariances are standard in set-structured prediction, their role in causal estimation with structured treatments has not been systematically developed.
Our treatment representation can be viewed as a causal instantiation of this principle: the treatment embedding is a function of the multiset of context--action components weighted by policy-induced mixture probabilities, and is therefore invariant to re-indexing and ordering.
Crucially, this is not merely an architectural convenience: the induced invariance aligns the statistical representation with causal semantics and enables both theoretical stability guarantees and practical generalization to rare or unseen policy variants.

In short, prior work has made significant progress on (i) CATE estimation for low-cardinality treatments, (ii) orthogonal robustness to nuisance estimation, and (iii) representation learning over covariates.
Nevertheless, under \emph{combinatorial policy treatments} with frequent iteration and long-tailed exposure, existing methods that rely on categorical treatment IDs remain systematically biased in the \emph{representation} step: they fail to respect policy equivalences and provide no meaningful notion of proximity in policy space.
We depart from this paradigm by combining (A) a permutation-invariant, compositional treatment embedding with (B) an orthogonalized uplift objective, thereby obtaining a framework that (1) contains standard DML-style robustness as a special case, (2) generalizes across structured treatments through parameter sharing, and (3) admits explicit expressiveness and stability guarantees tailored to policy perturbations.

\section{Proofs and Additional Technical Details}
\label{app:proofs}

\subsection{Notation and basic identities}
\label{app:notation}

Recall that each policy $t\in\mathcal{T}$ is specified by context-wise action distributions
$\{\Pi_t(\cdot\mid s)\}_{s\in\mathcal{S}}$ over a finite action set $\mathcal{A}$ and finite context set $\mathcal{S}$.
Let $w:\mathcal{S}\to[0,\infty)$ be context weights with $\sum_{s\in\mathcal{S}}w(s)=1$ and define the induced atom
weights
\[
\alpha_t(s,a)\;:=\;w(s)\Pi_t(a\mid s),\qquad (s,a)\in\mathcal{S}\times\mathcal{A},
\]
and the induced discrete measure $\mu_t$ on $\mathcal{S}\times\mathcal{A}$ by $\mu_t(\{(s,a)\})=\alpha_t(s,a)$.

Throughout the appendix, for a scalar (or vector) random variable $Z$ we write
$\|Z\|_2 := (\mathbb{E}\|Z\|^2)^{1/2}$.
For a function $h:\mathcal{T}\to\mathbb{R}^d$, we use
$\|h\|_{\infty} := \sup_{t\in\mathcal{T}}\|h(t)\|$ and
$\|h-\tilde h\|_{\infty} := \sup_{t\in\mathcal{T}}\|h(t)-\tilde h(t)\|$.
For a function $g:\mathbb{R}^p\to\mathbb{R}^d$, we will occasionally write
$\|g-\tilde g\|_{\infty,x} := \sup_{x}\|g(x)-\tilde g(x)\|$ when the supremum is taken over the domain of interest.

A basic identity that will be used repeatedly is the ``centering'' property
\begin{equation}
\mathbb{E}\!\left[h_0(T)-e_0(X)\mid X\right]=0,\qquad e_0(X):=\mathbb{E}[h_0(T)\mid X].
\label{eq:centering_identity}
\end{equation}
This is the main mechanism behind Neyman orthogonality and the second-order nature of nuisance effects.

\subsection{Proof of Proposition~\ref{prop:olr_wellspec}}

\begin{proof}[Proof of Proposition~\ref{prop:olr_wellspec}]
Fix any $x\in\mathbb{R}^p$ and $t\in\mathcal{T}$.
By Assumption~\ref{ass:mixture_suff}, there exist measurable $m_0$ and $g_0$ and a functional $F$ such that
\[
\mathbb{E}[Y(t)\mid X=x]
=
m_0(x)+g_0(x)^\top F(\mu_t).
\]
Define $h_0(t):=F(\mu_t)$.
Under consistency $Y=Y(T)$ and randomized assignment (so that conditioning on $(X,T)$ selects the corresponding
potential outcome), we have
\[
\mathbb{E}[Y\mid X=x,T=t]
=
\mathbb{E}[Y(t)\mid X=x]
=
m_0(x)+g_0(x)^\top h_0(t).
\]
Let $e_0(x):=\mathbb{E}[h_0(T)\mid X=x]$ and define
$m(x):=m_0(x)+g_0(x)^\top e_0(x)$.
Then, for any $(x,t)$,
\begin{align*}
m_0(x)+g_0(x)^\top h_0(t)
&=
\Big(m_0(x)+g_0(x)^\top e_0(x)\Big) + g_0(x)^\top\Big(h_0(t)-e_0(x)\Big)\\
&=
m(x)+g_0(x)^\top\Big(h_0(t)-e_0(x)\Big).
\end{align*}
Thus the conditional mean obeys \eqref{eq:olr_wellspec}.

Finally, the CATE for any pair $(t_1,t_0)$ satisfies
\begin{align*}
\tau(x;t_1,t_0)
&:=\mathbb{E}[Y(t_1)-Y(t_0)\mid X=x]\\
&=
\Big(m_0(x)+g_0(x)^\top h_0(t_1)\Big)-\Big(m_0(x)+g_0(x)^\top h_0(t_0)\Big)\\
&=
g_0(x)^\top\Big(h_0(t_1)-h_0(t_0)\Big),
\end{align*}
which is \eqref{eq:cate_oracle}. \qedhere
\end{proof}

\subsection{Proof of Lemma~\ref{thm:expressiveness} (Expressiveness)}

\begin{proof}[Proof of Lemma~\ref{thm:expressiveness}]
Because $\mathcal{S}$ and $\mathcal{A}$ are finite, the product set
$\mathcal{S}\times\mathcal{A}$ is finite as well.
Let $m:=|\mathcal{S}||\mathcal{A}|$ and fix any bijection (index map)
$\iota:\mathcal{S}\times\mathcal{A}\to\{1,2,\dots,m\}$.
For any measure $\mu$ supported on $\mathcal{S}\times\mathcal{A}$, define its coordinate vector
$\alpha(\mu)\in\mathbb{R}^m$ by
\[
\alpha(\mu)_{\iota(s,a)} := \mu(\{(s,a)\}).
\]
In particular, for $\mu_t$ induced by policy $t$, we have
$\alpha(\mu_t)_{\iota(s,a)}=\alpha_t(s,a)=w(s)\Pi_t(a\mid s)$ and
$\alpha(\mu_t)\in\Delta^{m-1}$ (the probability simplex).
The map $\mu\mapsto \alpha(\mu)$ is a linear homeomorphism between the space of measures on the finite set
$\mathcal{S}\times\mathcal{A}$ and $\mathbb{R}^m$ restricted to $\Delta^{m-1}$.

Define $\widetilde F:\alpha(\mathcal{M})\to\mathbb{R}^d$ by
$\widetilde F(\alpha(\mu)) := F(\mu)$ for $\mu\in\mathcal{M}$.
Since $F$ is continuous on $\mathcal{M}$ under the $\ell_1$/TV topology and the identification
$\|\mu-\mu'\|_1 = \|\alpha(\mu)-\alpha(\mu')\|_1$ holds on finite supports,
$\widetilde F$ is continuous on the set
\[
\alpha(\mathcal{M})=\{\alpha(\mu_t):t\in\mathcal{T}\}\subset\Delta^{m-1}.
\]
Moreover, $\Delta^{m-1}$ is compact and $\alpha(\mathcal{M})$ is compact as a closed subset of a compact set.
By the universal approximation theorem for ReLU MLPs on compact sets, for any $\varepsilon>0$ there exists
an integer $r$ (in fact we may take $r=m$) and an MLP $\rho:\mathbb{R}^r\to\mathbb{R}^d$ such that
\begin{equation}
\sup_{t\in\mathcal{T}}\big\|\widetilde F(\alpha(\mu_t))-\rho(\alpha(\mu_t))\big\|\le\varepsilon.
\label{eq:UA_on_simplex}
\end{equation}

It remains to realize $\alpha(\mu_t)$ as a permutation-invariant weighted sum.
Let $\{e_j\}_{j=1}^m$ denote the standard basis of $\mathbb{R}^m$ and choose $r=m$ and
\[
\phi(s,a):=e_{\iota(s,a)}\in\mathbb{R}^m.
\]
Then
\[
z_{\phi}(t)=\sum_{s\in\mathcal{S}}\sum_{a\in\mathcal{A}}\alpha_t(s,a)\phi(s,a)
=
\sum_{s,a}\alpha_t(s,a)e_{\iota(s,a)}
=
\alpha(\mu_t).
\]
Therefore, defining $h_{\phi,\rho}(t):=\rho(z_\phi(t))$ yields
\[
h_{\phi,\rho}(t)=\rho(\alpha(\mu_t)).
\]
Combining this with \eqref{eq:UA_on_simplex} gives
\[
\sup_{t\in\mathcal{T}}\|F(\mu_t)-h_{\phi,\rho}(t)\|
=
\sup_{t\in\mathcal{T}}\|\widetilde F(\alpha(\mu_t))-\rho(\alpha(\mu_t))\|
\le \varepsilon,
\]
which proves the claim.

\paragraph{Remark (why this is not vacuous).}
While the above construction uses the ``basis embedding'' $\phi(s,a)=e_{\iota(s,a)}$ and thus shows universality
in a straightforward way, it is precisely the \emph{shared atom map} $\phi$ together with
\emph{permutation-invariant aggregation} that constitutes the inductive bias of our approach.
This bias is what yields stability in policy space (Proposition~\ref{prop:stability}) and practical statistical
strength-sharing when $\mathcal{T}$ is huge but $\mathcal{S}\times\mathcal{A}$ has reusable structure.
\end{proof}

\subsection{Proof of Lemma~\ref{lem:neyman_orth} (Neyman orthogonality)}

\begin{proof}[Proof of Lemma~\ref{lem:neyman_orth}]
Recall the orthogonalized model \eqref{eq:olr_model}:
\[
Y = m_0(X)+g_0(X)^\top\big(h_0(T)-e_0(X)\big)+\varepsilon,
\qquad \mathbb{E}[\varepsilon\mid X,T]=0,
\]
and the score
\[
\psi(W; g,h,m,e)
:=
\Big(Y-m(X)-g(X)^\top(h(T)-e(X))\Big)\,(h(T)-e(X)),
\quad W:=(X,T,Y).
\]

\paragraph{Step 1: Unbiasedness at the truth.}
At the true functions $(g_0,h_0,m_0,e_0)$,
\[
Y-m_0(X)-g_0(X)^\top(h_0(T)-e_0(X))=\varepsilon,
\]
so
\[
\psi(W;g_0,h_0,m_0,e_0)=\varepsilon\,(h_0(T)-e_0(X)).
\]
Taking expectation and using iterated expectation gives
\begin{align*}
\mathbb{E}[\psi(W;g_0,h_0,m_0,e_0)]
&=
\mathbb{E}\big[\mathbb{E}[\varepsilon\,(h_0(T)-e_0(X))\mid X,T]\big]\\
&=
\mathbb{E}\big[(h_0(T)-e_0(X))\,\mathbb{E}[\varepsilon\mid X,T]\big]
=
0,
\end{align*}
proving $\mathbb{E}[\psi(W;g_0,h_0,m_0,e_0)]=0$.

\paragraph{Step 2: Gateaux derivative in nuisance directions vanishes.}
Fix perturbations $\delta m$ and $\delta e$ such that all expressions below are integrable
(e.g., $\delta m(X)$ and $\delta e(X)$ square-integrable).
Define the nuisance path
$m_r:=m_0+r\,\delta m$ and $e_r:=e_0+r\,\delta e$.
Consider
\[
\Psi(r):=\mathbb{E}\big[\psi(W;g_0,h_0,m_r,e_r)\big].
\]
Using the model for $Y$ and expanding the residual,
\begin{align*}
Y-m_r(X)-g_0(X)^\top(h_0(T)-e_r(X))
&=
\Big(m_0(X)-m_r(X)\Big)+g_0(X)^\top\Big(e_r(X)-e_0(X)\Big)+\varepsilon\\
&=
-r\,\delta m(X)+r\,g_0(X)^\top\delta e(X)+\varepsilon.
\end{align*}
Also,
\[
h_0(T)-e_r(X)=\big(h_0(T)-e_0(X)\big)-r\,\delta e(X).
\]
Hence
\[
\psi(W;g_0,h_0,m_r,e_r)
=
\Big(\varepsilon-r\,\delta m(X)+r\,g_0(X)^\top\delta e(X)\Big)\,
\Big((h_0(T)-e_0(X))-r\,\delta e(X)\Big).
\]
Expanding and collecting the terms linear in $r$ yields
\begin{align*}
\Psi(r)
&=
\underbrace{\mathbb{E}\big[\varepsilon\,(h_0(T)-e_0(X))\big]}_{=0}
\;+\;
r\,\mathbb{E}\Big[
-\delta m(X)\,(h_0(T)-e_0(X))
+(g_0(X)^\top\delta e(X))\,(h_0(T)-e_0(X))
-\varepsilon\,\delta e(X)
\Big]\\
&\qquad
+\;O(r^2),
\end{align*}
where the $O(r^2)$ term is justified by integrability and dominated convergence.

We now show each linear term has zero expectation.
For the first two terms, condition on $X$ and use \eqref{eq:centering_identity}:
\begin{align*}
\mathbb{E}\big[\delta m(X)\,(h_0(T)-e_0(X))\big]
&=
\mathbb{E}\big[\delta m(X)\,\mathbb{E}[h_0(T)-e_0(X)\mid X]\big]=0,\\
\mathbb{E}\big[(g_0(X)^\top\delta e(X))\,(h_0(T)-e_0(X))\big]
&=
\mathbb{E}\big[(g_0(X)^\top\delta e(X))\,\mathbb{E}[h_0(T)-e_0(X)\mid X]\big]=0.
\end{align*}
For the third term, condition on $(X,T)$ and use $\mathbb{E}[\varepsilon\mid X,T]=0$:
\[
\mathbb{E}[\varepsilon\,\delta e(X)]
=
\mathbb{E}\big[\delta e(X)\,\mathbb{E}[\varepsilon\mid X,T]\big]=0.
\]
Therefore the coefficient of $r$ in $\Psi(r)$ is zero, implying
$\Psi'(0)=0$, i.e.,
\[
\left.\frac{d}{dr}\right|_{r=0}
\mathbb{E}\big[\psi(W;g_0,h_0,m_0+r\,\delta m,e_0+r\,\delta e)\big]=0.
\]
This proves Neyman orthogonality with respect to $(m,e)$.
\end{proof}

\subsection{Proof of Theorem~\ref{thm:orth_robust} (Orthogonal robustness)}

\subsubsection{A deterministic second-order bound for nuisance perturbations}

The core technical step is to control how the orthogonal score changes when we replace the true nuisances
$(m_0,e_0)$ by estimators $(\hat m,\hat e)$, \emph{while simultaneously allowing} the learned effect components
$(\hat g,\hat h)$ to deviate from $(g_0,h_0)$.
This coupling is nontrivial in our setting because (i) the regressor $h(T)-e(X)$ is vector-valued,
(ii) $e(X)$ is itself an $X$-conditional expectation of $h(T)$, and
(iii) $\hat e$ may be constructed through a propensity model and hence is another high-dimensional nuisance.

\begin{lemma}[Second-order control of score perturbation]
\label{lem:score_perturb}
Assume the model \eqref{eq:olr_model} holds with $(m_0,g_0,h_0,e_0)$ and $\mathbb{E}[\varepsilon\mid X,T]=0$.
Let $(g,h,m,e)$ be (possibly random) candidates such that all quantities below are integrable.
Define $\Delta g:=g-g_0$, $\Delta h:=h-h_0$, $\Delta m:=m-m_0$, and $\Delta e:=e-e_0$.
Then
\begin{equation}
\Big\|\mathbb{E}\big[\psi(W;g,h,m,e)-\psi(W;g,h,m_0,e_0)\big]\Big\|
\;\le\;
C_1\,\big(\|\Delta m\|_2+\|\Delta h\|_{\infty}+\|\Delta g\|_{\infty,x}\big)\,\|\Delta e\|_2
\;+\;
C_2\,\|\Delta e\|_2^2,
\label{eq:score_perturb_bound}
\end{equation}
for constants $C_1,C_2$ depending only on moment bounds of $(Y,g_0,h_0)$ and on the domain of $(X,T)$.
\end{lemma}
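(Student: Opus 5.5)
The plan is a direct algebraic expansion of the score difference. Then every resulting term is either killed by a conditional-mean identity, namely the centering identity \eqref{eq:centering_identity} or $\mathbb{E}[\varepsilon\mid X,T]=0$, or bounded by Cauchy--Schwarz as a product of two perturbation sizes. I will abbreviate $D_0:=h(T)-e_0(X)$, $D:=h(T)-e(X)=D_0-\Delta e(X)$, and
\[
R_0:=Y-m_0(X)-g(X)^\top D_0,\qquad R:=Y-m(X)-g(X)^\top D=R_0-\Delta m(X)+g(X)^\top\Delta e(X).
\]
With this notation $\psi(W;g,h,m,e)-\psi(W;g,h,m_0,e_0)=RD-R_0D_0$, and this expands exactly into three pieces:
\[
-R_0\,\Delta e(X)\;+\;\big(-\Delta m(X)+g(X)^\top\Delta e(X)\big)D_0\;+\;\big(\Delta m(X)-g(X)^\top\Delta e(X)\big)\Delta e(X).
\]

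Next I substitute the model \eqref{eq:olr_model} into $R_0$. This gives $R_0=\varepsilon-\Delta g(X)^\top\big(h_0(T)-e_0(X)\big)-g(X)^\top\Delta h(T)$, and I also write $D_0=(h_0(T)-e_0(X))+\Delta h(T)$. I then take expectations term by term.

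In $\mathbb{E}[R_0\Delta e(X)]$, the $\varepsilon$ part vanishes by $\mathbb{E}[\varepsilon\mid X,T]=0$. The $\Delta g$ part vanishes by conditioning on $X$ and using \eqref{eq:centering_identity}. The remaining part is bounded by $\sup_x\|g(x)\|\,\|\Delta h\|_\infty\|\Delta e\|_2$, which has the required product form.

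In the middle piece, the $h_0(T)-e_0(X)$ component of $D_0$ is again killed by centering. What remains is $\mathbb{E}[(-\Delta m+g^\top\Delta e)\Delta h(T)]$. Its $g^\top\Delta e\,\Delta h$ part is bounded by $\sup\|g\|\,\|\Delta h\|_\infty\|\Delta e\|_2$.

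The last piece is bounded by $\|\Delta m\|_2\|\Delta e\|_2+\sup\|g\|\,\|\Delta e\|_2^2$.

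Throughout, I use $\sup_x\|g(x)\|\le\sup_x\|g_0(x)\|+\|\Delta g\|_{\infty,x}$. Boundedness of $g_0$ on the domain of interest is where the "domain of $(X,T)$" dependence of $C_1,C_2$ enters. The $\|\Delta g\|_{\infty,x}\|\Delta e\|_2$ contribution appears when this sup is multiplied against $\|\Delta e\|_2$.

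The main obstacle is the cross term $-\mathbb{E}[\Delta m(X)\,\Delta h(T)]$ left over from the middle piece. It involves no $\Delta e$ factor, so it does not fit the right-hand side of \eqref{eq:score_perturb_bound} as written. Even in the RCT case $T\perp X$ it equals $\mathbb{E}[\Delta m(X)]^\top\mathbb{E}[\Delta h(T)]$, which is generally nonzero.

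My first attempt will be to remove it by recentering. If $e$ is taken as the conditional mean of the \emph{candidate} $h$ rather than of $h_0$, as Algorithm~\ref{alg:dsin} does through $\bar h$, then the correct reference point is $e_h:=\mathbb{E}[h(T)\mid X]$ instead of $e_0$. Centering of $h(T)-e_h(X)$ then kills this term, and $\Delta e$ is reinterpreted as $e-e_h$. The relation $e_h-e_0=\mathbb{E}[\Delta h(T)\mid X]$ accounts for the discrepancy and should feed into the $\|\Delta h\|_\infty\|\Delta e\|_2$ term.

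If that reinterpretation is not admissible, I would state the bound with an additional $C_3\|\Delta m\|_2\|\Delta h\|_\infty$ term. This term is still second order, being a product of two vanishing errors, so Theorem~\ref{thm:orth_robust} would go through with the modified product.

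Finally, I will collect the constants as $C_1\lesssim 1+\sup\|g_0\|$ and $C_2\lesssim\sup\|g_0\|+\|\Delta g\|_{\infty,x}$. The latter is bounded once $\Delta g=o_p(1)$.
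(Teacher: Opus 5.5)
Your expansion is correct and matches the paper's. The paper also writes the score difference as $(-\Delta m+g^\top\Delta e)(h(T)-e(X))-R_{m_0,e_0}\Delta e$ and then substitutes the model into $R_{m_0,e_0}$. The obstacle you flag is genuine: it is a counterexample to the lemma as stated, and it is where the paper's proof breaks. The paper asserts that the centering identity \eqref{eq:centering_identity} removes every term linear in $\Delta m$, and writes the expectation as $\mathbb{E}[\Delta m\,\Delta e]+\mathbb{E}[(g^\top\Delta e)\Delta e]+\mathbb{E}[\Gamma\,\Delta e]$. But centering applies only to the $h_0(T)-e_0(X)$ part of $h(T)-e(X)$. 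The $\Delta h(T)$ part leaves $-\mathbb{E}[\Delta m(X)\Delta h(T)]$, which that display silently drops. Concretely, take $e=e_0$, $m=m_0+c$, $h=h_0+v$ with constants $c\neq 0$, $v\neq 0$, and any $g$. Then $\psi(W;g,h,m,e_0)-\psi(W;g,h,m_0,e_0)=-c\,\big(h_0(T)-e_0(X)+v\big)$ has expectation $-cv\neq 0$, while the right-hand side of \eqref{eq:score_perturb_bound} is zero. No choice of $C_1,C_2$ saves the statement, so you need not hedge. Your fallback with an extra $C_3\|\Delta m\|_2\|\Delta h\|_\infty$ is the right repair, and the remainder in Theorem~\ref{thm:orth_robust} would correspondingly need a $\delta_m\delta_h$ term.

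Your recentering idea is right in spirit, but its last sentence is not. The $\Delta m$-terms from your second and third pieces combine exactly into $\mathbb{E}[\Delta m(X)(e(X)-h(T))]=\mathbb{E}[\Delta m(X)(e(X)-e_h(X))]$, with $e_h:=\mathbb{E}[h(T)\mid X]$. So the sharp form of your fallback replaces $\|\Delta m\|_2\|\Delta e\|_2$ by $\|\Delta m\|_2\|e-e_h\|_2$. Two consequences follow.
\begin{itemize}
\item If $e=e_h$ exactly (known $\pi_0$ plugged into $\sum_t\pi_0(t\mid x)h(t)$), the $\Delta m$ contribution vanishes and the remaining terms do have the stated product form.
\item If you move the reference point to $(m_0,e_h)$ and set $\Delta e:=e-e_h$, you get an exact identity. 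Conditional centering of $h(T)-e_h(X)$ together with $\mathbb{E}[R_{m_0,e_h}\mid X]=0$ gives $\mathbb{E}[\psi(W;g,h,m,e)-\psi(W;g,h,m_0,e_h)]=\mathbb{E}[(\Delta m(X)-g(X)^\top\Delta e(X))\,\Delta e(X)]$, with no $\Delta g$ or $\Delta h$ terms at all. This recentered version is also the one that matches Algorithm~\ref{alg:dsin_k-fold}, where $\hat e=\sum_t\hat\pi(t\mid x)\hat h(t)$ and $e-e_h$ is purely propensity error.
\end{itemize}
But if you keep the statement's $\Delta e=e-e_0$ and reference point $e_0$, the discrepancy $e_h-e_0=\mathbb{E}[\Delta h(T)\mid X]$ multiplies $\Delta m$, not $\Delta e$. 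It therefore reproduces the $\|\Delta m\|_2\|\Delta h\|_\infty$ term rather than feeding into $\|\Delta h\|_\infty\|\Delta e\|_2$.

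A smaller point in your favour: for $\mathbb{E}[(g^\top\Delta e)\Delta e]$ you genuinely need $\sup_x\|g(x)\|$, as you use. The paper bounds that term by $(\mathbb{E}\|g(X)\|^2)^{1/2}\|\Delta e\|_2^2$, which is not a valid Cauchy--Schwarz step.
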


\begin{proof}[Proof of Lemma~\ref{lem:score_perturb}]
Write the score difference as
\[
\psi(W;g,h,m,e)-\psi(W;g,h,m_0,e_0)
=
\big(R_{m,e}(W)-R_{m_0,e_0}(W)\big)\,(h(T)-e(X))
\;+\;
R_{m_0,e_0}(W)\,\big((h(T)-e(X))-(h(T)-e_0(X))\big),
\]
where
\[
R_{m,e}(W):=Y-m(X)-g(X)^\top(h(T)-e(X)).
\]
Noting $(h(T)-e(X))-(h(T)-e_0(X))=-(e(X)-e_0(X))=-\Delta e(X)$, we have
\begin{equation}
\psi(W;g,h,m,e)-\psi(W;g,h,m_0,e_0)
=
\big(R_{m,e}(W)-R_{m_0,e_0}(W)\big)\,(h(T)-e(X))
-
R_{m_0,e_0}(W)\,\Delta e(X).
\label{eq:score_diff_start}
\end{equation}

We next expand the residual difference.
A direct calculation gives
\begin{align*}
R_{m,e}(W)-R_{m_0,e_0}(W)
&=
-\Delta m(X)-g(X)^\top\big(h(T)-e(X)\big)+g(X)^\top\big(h(T)-e_0(X)\big)\\
&=
-\Delta m(X)+g(X)^\top\Delta e(X).
\end{align*}
Plugging into \eqref{eq:score_diff_start} yields
\begin{align}
\psi(W;g,h,m,e)-\psi(W;g,h,m_0,e_0)
&=
\Big(-\Delta m(X)+g(X)^\top\Delta e(X)\Big)\,\big(h(T)-e(X)\big)
-
R_{m_0,e_0}(W)\,\Delta e(X).
\label{eq:score_diff_expanded}
\end{align}

Now we express $R_{m_0,e_0}(W)$ under the true model:
\begin{align*}
R_{m_0,e_0}(W)
&=
Y-m_0(X)-g(X)^\top(h(T)-e_0(X))\\
&=
\underbrace{\Big(Y-m_0(X)-g_0(X)^\top(h_0(T)-e_0(X))\Big)}_{=\varepsilon}
+\;g_0(X)^\top(h_0(T)-e_0(X))-g(X)^\top(h(T)-e_0(X))\\
&=
\varepsilon
-\Delta g(X)^\top\big(h_0(T)-e_0(X)\big)
-g(X)^\top\Delta h(T),
\end{align*}
where we used $h(T)=h_0(T)+\Delta h(T)$ and $\Delta h(T):=h(T)-h_0(T)$.

Substitute this into \eqref{eq:score_diff_expanded} and take expectation.
The expectation of the term $\varepsilon\,\Delta e(X)$ vanishes by iterated expectation:
\[
\mathbb{E}[\varepsilon\,\Delta e(X)]=\mathbb{E}\big[\Delta e(X)\,\mathbb{E}[\varepsilon\mid X,T]\big]=0.
\]
Moreover, the key centering identity \eqref{eq:centering_identity} implies that whenever $A(X)$ is measurable in $X$,
\[
\mathbb{E}\big[A(X)\,(h_0(T)-e_0(X))\big]=\mathbb{E}\big[A(X)\,\mathbb{E}[h_0(T)-e_0(X)\mid X]\big]=0.
\]
This kills the \emph{first-order} terms in $\Delta m$ and in $\Delta e$ that would otherwise appear in the score bias.

What remains after cancellations are only products of perturbations, which we bound by Cauchy--Schwarz.
Concretely, from \eqref{eq:score_diff_expanded} and the above expansions, we obtain
\[
\mathbb{E}\big[\psi(W;g,h,m,e)-\psi(W;g,h,m_0,e_0)\big]
=
\mathbb{E}\big[\Delta m(X)\,\Delta e(X)\big]
+
\mathbb{E}\big[(g(X)^\top\Delta e(X))\,\Delta e(X)\big]
+
\mathbb{E}\big[\Gamma(X,T)\,\Delta e(X)\big],
\]
where $\Gamma(X,T)$ collects terms involving $\Delta g$ and $\Delta h$ multiplied by bounded random quantities
(e.g., $h_0(T)-e_0(X)$ and $g(X)$).
Using $\|g(X)\|\le \|g_0(X)\|+\|\Delta g\|_{\infty,x}$ and $\|\Delta h(T)\|\le \|\Delta h\|_\infty$, we bound
\begin{align*}
\big\|\mathbb{E}[\Delta m(X)\,\Delta e(X)]\big\|
&\le \|\Delta m\|_2\,\|\Delta e\|_2,\\
\big\|\mathbb{E}[(g(X)^\top\Delta e(X))\,\Delta e(X)]\big\|
&\le \mathbb{E}[\|g(X)\|\,\|\Delta e(X)\|^2]
\le (\mathbb{E}\|g(X)\|^2)^{1/2}\,\|\Delta e\|_2^2,\\
\big\|\mathbb{E}[\Gamma(X,T)\,\Delta e(X)]\big\|
&\le \big(\|\Delta g\|_{\infty,x}+\|\Delta h\|_\infty\big)\,\|\Delta e\|_2 \cdot C,
\end{align*}
for a constant $C$ depending only on bounded moments of $g_0(X)$ and $h_0(T)$ (cf. Assumption~\ref{ass:regularity_orth}).
Collecting the terms yields \eqref{eq:score_perturb_bound}.
\end{proof}

\subsubsection{Proof of Theorem~\ref{thm:orth_robust}}

\begin{proof}[Proof of Theorem~\ref{thm:orth_robust}]
We separate the argument into an algebraic decomposition (capturing the effect-component errors) and a
nuisance-induced remainder (capturing orthogonal robustness).

\paragraph{Step 1: Algebraic decomposition of the plug-in CATE.}
Recall $\tau(x;t_1,t_0)=g_0(x)^\top(h_0(t_1)-h_0(t_0))$ from Proposition~\ref{prop:olr_wellspec} and
$\hat\tau(x;t_1,t_0)=\hat g(x)^\top(\hat h(t_1)-\hat h(t_0))$.
Add and subtract $\hat g(x)^\top(h_0(t_1)-h_0(t_0))$:
\begin{align}
\hat\tau(x;t_1,t_0)-\tau(x;t_1,t_0)
&=
\underbrace{(\hat g(x)-g_0(x))^\top\big(h_0(t_1)-h_0(t_0)\big)}_{=:~\mathrm{error(g)}}
+
\hat g(x)^\top\Big((\hat h-h_0)(t_1)-(\hat h-h_0)(t_0)\Big).
\label{eq:tau_decomp_step1}
\end{align}
Decompose the second term further by writing $\hat g(x)=g_0(x)+(\hat g(x)-g_0(x))$:
\begin{align}
\hat g(x)^\top\Big((\hat h-h_0)(t_1)-(\hat h-h_0)(t_0)\Big)
&=
\underbrace{g_0(x)^\top\Big((\hat h-h_0)(t_1)-(\hat h-h_0)(t_0)\Big)}_{=:~\mathrm{error(h)}}
+
R_{gh}(x;t_1,t_0),
\label{eq:tau_decomp_step2}
\end{align}
where the remainder
\[
R_{gh}(x;t_1,t_0)
:=
(\hat g(x)-g_0(x))^\top\Big((\hat h-h_0)(t_1)-(\hat h-h_0)(t_0)\Big)
\]
is \emph{second order} in the effect-component errors and obeys the deterministic bound
\begin{equation}
|R_{gh}(x;t_1,t_0)| \;\le\; 2\,\|\hat g(x)-g_0(x)\|\,\|\hat h-h_0\|_\infty
\;\le\;2\,\delta_g(x)\,\delta_h.
\label{eq:cross_term_bound}
\end{equation}
In many asymptotic regimes, $\delta_g(x)\delta_h=o_p(\delta_g(x)+\delta_h)$; we keep it explicit here for completeness.

Combining \eqref{eq:tau_decomp_step1}--\eqref{eq:tau_decomp_step2} gives
\begin{equation}
\hat\tau(x;t_1,t_0)-\tau(x;t_1,t_0)
=
\mathrm{error(g)}+\mathrm{error(h)}+R_{gh}(x;t_1,t_0).
\label{eq:tau_error_full}
\end{equation}

\paragraph{Step 2: Orthogonal robustness isolates nuisance effects.}
The central claim of Theorem~\ref{thm:orth_robust} is about the \emph{additional} impact of nuisance estimation
beyond the effect-component errors.
To make this precise, note that the orthogonalized learning procedure (Algorithm~\ref{alg:dsin})
is driven by the score $\psi$ and its cross-fitted empirical counterpart.
Under cross-fitting (Assumption~\ref{ass:regularity_orth}(3)), we may condition on the nuisance estimates
$(\hat m^{(-k)},\hat e^{(-k)})$ and treat them as fixed when taking expectation over the evaluation fold.
Lemma~\ref{lem:score_perturb} then yields the deterministic second-order control
\[
\Big\|
\mathbb{E}\big[\psi(W;\hat g,\hat h,\hat m,\hat e)-\psi(W;\hat g,\hat h,m_0,e_0)\big]
\Big\|
\;\le\;
C_1(\delta_m+\delta_h+\|\hat g-g_0\|_{\infty,x})\,\delta_e + C_2\,\delta_e^2.
\]
By Assumption~\ref{ass:regularity_orth}(4), $\delta_e=o_p(1)$, so $\delta_e^2=o_p(\delta_e)$ and can be absorbed into the
product term without changing the order (by enlarging constants).
Thus,
\begin{equation}
\Big\|
\mathbb{E}\big[\psi(W;\hat g,\hat h,\hat m,\hat e)-\psi(W;\hat g,\hat h,m_0,e_0)\big]
\Big\|
=
O_p\!\Big((\delta_m+\delta_h+\|\hat g-g_0\|_{\infty,x})\,\delta_e\Big).
\label{eq:score_second_order}
\end{equation}

Equation~\eqref{eq:score_second_order} is the formal mathematical expression of the statement
``nuisance errors enter only at second order'':
the perturbation in the orthogonal score induced by $(\hat m,\hat e)$ is proportional to a \emph{product}
involving $\delta_e$ rather than a sum of first-order terms.
This is exactly where orthogonality matters; without centering \eqref{eq:centering_identity}, the bound would contain
linear (first-order) terms in $\delta_m$ and $\delta_e$.

Finally, the CATE functional depends on the learned effect components through
$\hat\tau(x;t_1,t_0)=\hat g(x)^\top(\hat h(t_1)-\hat h(t_0))$.
The learning procedure targets $(\hat g,\hat h)$ through the orthogonal score,
so the nuisance-induced perturbation in the target is controlled by the same second-order quantity
appearing in \eqref{eq:score_second_order}.
This yields the remainder term stated in Theorem~\ref{thm:orth_robust}:
\[
\mathrm{error(second~order)}
=
O_p\!\Big((\delta_m+\delta_h+\sup_x\delta_g(x))\,\delta_e\Big),
\]
where we used $\|\hat g-g_0\|_{\infty,x}=\sup_x\delta_g(x)$ by definition.
Combining with \eqref{eq:tau_error_full} completes the proof. \qedhere
\end{proof}

\paragraph{Remark (where the technical difficulty lies).}
Compared to the classical partially linear model and standard DML analyses, two aspects require extra care here:
(i) the ``treatment regressor'' $h(T)-e(X)$ is \emph{vector-valued} and \emph{learned} (hence the appearance of $\delta_h$),
and (ii) the nuisance $e(X)=\mathbb{E}[h(T)\mid X]$ is an \emph{embedding propensity} and may be constructed via a
propensity model (hence $\delta_e$ couples representation learning and nuisance estimation).
Lemma~\ref{lem:score_perturb} makes this coupling explicit, and shows that orthogonalization controls it at second order.


\subsection{Proof of Proposition~\ref{prop:stability} (Stability)}

\begin{proof}[Proof of Proposition~\ref{prop:stability}]
Recall
\[
z(t)=\sum_{s\in\mathcal{S}}\sum_{a\in\mathcal{A}} w(s)\Pi_t(a\mid s)\,\phi(s,a),
\qquad
h(t)=\rho(z(t)).
\]
For any $t,t'\in\mathcal{T}$, we have
\[
z(t)-z(t')
=
\sum_{s\in\mathcal{S}}\sum_{a\in\mathcal{A}} w(s)\Big(\Pi_t(a\mid s)-\Pi_{t'}(a\mid s)\Big)\phi(s,a).
\]
Taking norms and using the triangle inequality,
\begin{align*}
\|z(t)-z(t')\|
&\le
\sum_{s\in\mathcal{S}}\sum_{a\in\mathcal{A}} w(s)\,|\Pi_t(a\mid s)-\Pi_{t'}(a\mid s)|\,\|\phi(s,a)\|\\
&\le
\Big(\sup_{(s,a)}\|\phi(s,a)\|\Big)\,
\sum_{s\in\mathcal{S}}w(s)\sum_{a\in\mathcal{A}}|\Pi_t(a\mid s)-\Pi_{t'}(a\mid s)|\\
&\le
B\,\sum_{s\in\mathcal{S}}w(s)\,\|\Pi_t(\cdot\mid s)-\Pi_{t'}(\cdot\mid s)\|_1
=
B\,d_\Pi(t,t').
\end{align*}
If $\rho$ is $L_\rho$-Lipschitz, then
\[
\|h(t)-h(t')\|
=
\|\rho(z(t))-\rho(z(t'))\|
\le
L_\rho\,\|z(t)-z(t')\|
\le
L_\rho\,B\,d_\Pi(t,t'),
\]
which is \eqref{eq:embedding_stability}.

For the uplift bound, recall $\tau(x;t,t')=g_0(x)^\top(h_0(t)-h_0(t'))$ and assume $\|g_0(x)\|\le G$.
Then Cauchy--Schwarz yields
\[
|\tau(x;t,t')|
\le
\|g_0(x)\|\,\|h_0(t)-h_0(t')\|
\le
G\,\|h_0(t)-h_0(t')\|.
\]
Applying the already-proved embedding stability bound to $h_0$ completes the proof. \qedhere
\end{proof}

\begin{algorithm}[t]
\caption{Permutation-invariant orthogonal uplift learning (POUL)}
\label{alg:dsin_k-fold}
\begin{algorithmic}[1]
\REQUIRE Data $\{(X_i,T_i,Y_i)\}_{i=1}^n$, policy specs $\{\Pi_t(\cdot\mid s)\}_{t,s}$, weights $w(\cdot)$, folds $K$ (optional)
\STATE Split indices into folds $I_1,\dots,I_K$ (set $K=1$ to disable cross-fitting)
\FOR{$k=1,\dots,K$}
  \STATE Fit nuisance $\hat m^{(-k)}(\cdot)\approx \mathbb{E}[Y\mid X=\cdot]$ using $\{i\notin I_k\}$
  \STATE (Observational only) Fit propensity $\hat\pi^{(-k)}(t\mid x)\approx \mathbb{P}(T=t\mid X=x)$ using $\{i\notin I_k\}$
\ENDFOR
\STATE Initialize parameters $(\omega,\nu)$ for $\phi_\omega,\rho_\nu$ and $(\eta)$ for $g_\eta$; optionally initialize $m_\theta$
\REPEAT
  \STATE Sample a mini-batch $\mathcal{B}\subset\{1,\dots,n\}$
  \FOR{each $i\in\mathcal{B}$}
    \STATE Compute $h_{\omega,\nu}(T_i)$ via \eqref{eq:z_def}--\eqref{eq:h_def}
    \STATE Let $k(i)$ be the fold index such that $i\in I_{k(i)}$
    \STATE Set $\hat e_h(X_i)\leftarrow \sum_{t\in\mathcal{T}}\hat\pi^{(-k(i))}(t\mid X_i)\,h_{\omega,\nu}(t)$ \COMMENT{use known $\pi$ in RCT}
    \STATE Predict $\hat Y_i \leftarrow \hat m^{(-k(i))}(X_i)+g_\eta(X_i)^\top\!\big(h_{\omega,\nu}(T_i)-\hat e_h(X_i)\big)$
  \ENDFOR
  \STATE Update $(\eta,\omega,\nu)$ (and optionally $\theta$) by SGD on $\frac{1}{|\mathcal{B}|}\sum_{i\in\mathcal{B}}\ell(\hat Y_i,Y_i)+\lambda\Omega$
\UNTIL{convergence}
\STATE \textbf{return} $\hat\tau(x;t_1,t_0)=g_{\hat\eta}(x)^\top\!\big(h_{\hat\omega,\hat\nu}(t_1)-h_{\hat\omega,\hat\nu}(t_0)\big)$
\end{algorithmic}
\end{algorithm}
To reduce overfitting bias when flexible learners are used for nuisance components,
we optionally adopt cross-fitting: estimate nuisance functions on held-out folds, and train the main model using fold-specific nuisance predictions.

\section{Experimental Implementation Details}
\label{sec:exp_details}

In this appendix, we provide a comprehensive description of the experimental setup. We detail the training protocols used in our study.

\subsection{Training Protocol}
\label{subsec:training_protocol}

To guarantee experimental consistency, all deep learning models are implemented in PyTorch and trained on NVIDIA Tesla P40 GPUs. We employ the \textbf{Adam} optimizer with a fixed \textbf{learning rate} of $1 \times 10^{-4}$, a \textbf{batch size} of $512$, and a \textbf{random seed} set to $3407$. While these foundational settings remain constant, we vary the training epochs and the training strategy (single-stage versus two-stage) to identify the optimal configuration for each model variant. The training epoch configuration is denoted by the tuple $(i, j)$, where $i$ represents the epochs for the first stage (e.g., Global pre-training) and $j$ for the second stage (e.g., Core fine-tuning):

\begin{itemize}
    \item \textbf{Single-stage training:} Denoted as $(i, 0)$ or $(0, j)$, indicating the model is trained exclusively on the global population $\mathcal{D}_G$ (for $i$ epochs) or the core population $\mathcal{D}_C$ (for $j$ epochs), respectively.
    \item \textbf{Two-stage training:} Denoted as $(i, j)$, which comprises an initial training phase on the global population $\mathcal{D}_G$ for $i$ epochs, followed by fine-tuning on the core population $\mathcal{D}_C$ for $j$ epochs.
\end{itemize}

In contrast, for the \textbf{Causal Forest} baseline, the model performance is predominantly governed by the ensemble size. Consequently, we maintained fixed settings for secondary hyperparameters (e.g., $\textit{maxDepth}=10$, $\textit{subsamplingRate}=0.8$, $\textit{minInstancesPerNode}=500$) and exclusively tuned the \textbf{number of trees} (\textit{numTrees}) to optimize the estimation stability and accuracy.

\section{Evaluation Metrics} \label{sec:appendix_metrics}

As ground-truth uplift is unobservable, we assess our models on a randomized test set, focusing on \textbf{uplift ranking} and \textbf{ITE estimation accuracy}.
\subsection{Uplift Ranking Capability (AUUC)}
We evaluate ranking performance using the normalized \textbf{Area Under the Uplift Curve (AUUC)}, which quantifies the cumulative incremental gain achieved when targeting samples sorted by predicted Individual Treatment Effect (ITE).

Formally, let $\mathcal{D} = \{(x_i, t_i, y_i)\}_{i=1}^n$ denote the test dataset, where $x_i$ is the feature vector, $t_i \in \{0, 1\}$ indicates the treatment assignment, and $y_i$ represents the observed outcome. Let $\hat{\tau}(x_i)$ be the predicted ITE.

We sort the samples in descending order of $\hat{\tau}(x_i)$. Let $i_1, \dots, i_n$ be the indices of the sorted samples such that $\hat{\tau}(x_{i_1}) \geq \dots \geq \hat{\tau}(x_{i_n})$. For the top $k$ samples, we define the cumulative counts ($n_t, n_c$) and outcome sums ($y_t, y_c$) for the treated and control groups as:
\begin{align}
    n_t(k) &= \sum_{j=1}^k \mathbb{I}(t_{i_j} = 1), & n_c(k) &= \sum_{j=1}^k \mathbb{I}(t_{i_j} = 0), \\
    y_t(k) &= \sum_{j=1}^k y_{i_j} \mathbb{I}(t_{i_j} = 1), & y_c(k) &= \sum_{j=1}^k y_{i_j} \mathbb{I}(t_{i_j} = 0).
\end{align}
The \emph{Lift} at rank $k$ estimates the average treatment effect within the top $k$ units:
\begin{equation}
    \text{Lift}(k) = \frac{y_t(k)}{n_t(k)} - \frac{y_c(k)}{n_c(k)},
\end{equation}
The \textbf{Cumulative Gain}, $G(k)$, is defined as the total estimated uplift up to rank $k$:
\begin{equation}
    G(k) = \text{Lift}(k) \times k.
\end{equation}
To facilitate comparison across different settings, we calculate the \textbf{Normalized Cumulative Gain} $\tilde{G}(k)$ by scaling $G(k)$ with the absolute global gain at $k=n$:
\begin{equation}
    \tilde{G}(k) = \frac{G(k)}{|G(n)|}.
\end{equation}
Finally, the \textbf{Normalized AUUC} is computed as the average of these normalized gains across the \emph{entire} test dataset:
\begin{equation}
    \text{AUUC} = \frac{1}{n} \sum_{k=1}^n \tilde{G}(k).
\end{equation}

\subsection{ITE Estimation Accuracy (MAPE)}

To assess the accuracy of the predicted ITE values, we employ the \textbf{Mean Absolute Percentage Error (MAPE)}. Given the unobservability of individual effects, we compute this metric over $M=10$ bins (deciles) grouped by predicted ITE.

Let $\hat{\tau}_m$ and $\tau_m$ denote the average predicted ITE and the observed Average Treatment Effect (ATE) within the $m$-th bin, respectively. The MAPE is calculated as the average relative deviation across all bins:
\begin{equation}
    \text{MAPE} = \frac{1}{M} \sum_{m=1}^{M} \left| \frac{\hat{\tau}_m - \tau_m}{\tau_m} \right|.
\end{equation}
A lower MAPE indicates that the predicted uplift magnitudes align more closely with the observed treatment effects.

\section{Detailed Experimental Results} \label{sec:results_details}
To rigorously assess robustness, we employ an Out-of-Time (OOT) evaluation protocol, where models are trained on historical data (e.g., March) and evaluated on non-overlapping future data (e.g., June). We report the Area Under the Uplift Curve (AUUC) and Mean Absolute Percentage Error (MAPE) across two distinct test sets within the evaluation period: the Global Population ($\mathcal{D}_G$) and the Core Population ($\mathcal{D}_C$) . The detailed performance comparisons are summarized in Table~\ref{tab:detailed_epochs_ltv} and Table~\ref{tab:detailed_epochs_cost}.

\begin{table}[ht]
    \centering
    \caption{Detailed performance evaluation of \textbf{LTV (GMV) prediction} on the \textbf{June Test Set}. The \textbf{Training Config} column specifies the training data source and tree count ($n_{tree}$) for Causal Forest, and the training epoch tuple $(i, j)$ for deep learning models (Global, Core).}
    \label{tab:detailed_epochs_ltv}
    \renewcommand{\arraystretch}{1.2} 
    \setlength{\tabcolsep}{3pt}      
    
    \resizebox{\textwidth}{!}{
    \begin{tabular}{l c c c c c c c c c}
        \toprule
        \multirow{3}{*}{\textbf{Model}} & \multirow{3}{*}{\textbf{Training Config}} & \multicolumn{4}{c}{\textbf{Test Set: Global ($\mathcal{D}_G$)}} & \multicolumn{4}{c}{\textbf{Test Set: Core ($\mathcal{D}_C$)}} \\
        \cmidrule(lr){3-6} \cmidrule(lr){7-10}
         & & \multicolumn{2}{c}{\textbf{Treatment 1 ($T_1$)}} & \multicolumn{2}{c}{\textbf{Treatment 2 ($T_2$)}} & \multicolumn{2}{c}{\textbf{Treatment 1 ($T_1$)}} & \multicolumn{2}{c}{\textbf{Treatment 2 ($T_2$)}} \\
        \cmidrule(lr){3-4} \cmidrule(lr){5-6} \cmidrule(lr){7-8} \cmidrule(lr){9-10}
         & & \textbf{AUUC}$\uparrow$ & \textbf{MAPE}$\downarrow$ & \textbf{AUUC}$\uparrow$ & \textbf{MAPE}$\downarrow$ & \textbf{AUUC}$\uparrow$ & \textbf{MAPE}$\downarrow$ & \textbf{AUUC}$\uparrow$ & \textbf{MAPE}$\downarrow$ \\
        \midrule
        
        Random & --- & 0.500 & --- & 0.500 & --- & 0.500 & --- & 0.500 & --- \\
        
        \multirow{6}{*}{Causal Forest} 
         & $\mathcal{D}_G$ ($n_{tree}{=}300$) & 0.539 & 0.898 & 0.542 & 0.686 & 0.451 & 7.075 & 0.525 & 0.503 \\
         & $\mathcal{D}_G$ ($n_{tree}{=}500$) & 0.575 & 2.213 & 0.550 & 2.917 & \textbf{0.536} & \textbf{1.086} & \textbf{0.536} & \textbf{0.836} \\
         & $\mathcal{D}_G$ ($n_{tree}{=}600$) & \textbf{0.611} & \textbf{1.000} & \textbf{0.632} & \textbf{1.850} & 0.483 & 1.005 & 0.489 & 0.979 \\
         & $\mathcal{D}_C$ ($n_{tree}{=}300$) & 0.547 & 6.236 & 0.568 & 2.118 & 0.339 & 1.811 & 0.529 & 1.556 \\
         & $\mathcal{D}_C$ ($n_{tree}{=}500$) & 0.470 & 5.537 & 0.623 & 7.342 & 0.472 & 1.216 & 0.505 & 0.746 \\
         & $\mathcal{D}_C$ ($n_{tree}{=}600$) & 0.515 & 3.632 & 0.525 & 2.774 & 0.452 & 6.905 & 0.531 & 0.612 \\
         
        \midrule
        
        \multirow{7}{*}{T-Learner (DNN)} 
         & $(10, 0)$ & \textbf{0.581} & \textbf{3.310} & \textbf{0.639} & \textbf{2.711} & 0.488 & 2.763 & 0.574 & 3.311 \\
         & $(15, 0)$ & 0.545 & 3.139 & 0.656 & 1.906 & 0.482 & 2.428 & 0.551 & 1.486 \\
         & $(20, 0)$ & 0.550 & 4.270 & 0.618 & 2.760 & 0.581 & 1.953 & 0.556 & 1.528
         \\
         & $(25, 0)$ & 0.495 & 3.581 & 0.617 & 1.297 & \textbf{0.585} & \textbf{2.487} & \textbf{0.563} & \textbf{1.426}
         \\
         & $(10, 1)$  & 0.342 & 18.368 & 0.607 & 2.562 & 0.348 & 3.703 & 0.498 & 3.311 \\
         & $(10, 2)$  & 0.330 & 13.188 & 0.592 & 2.464 & 0.347 & 2.900 & 0.500 & 4.023 \\
         & $(10, 5)$ & 0.342 & 6.694 & 0.594 & 2.711 & 0.365 & 4.419 & 0.504 & 1.330 \\
        \midrule
        
        \multirow{6}{*}{DESCN} 
         & $(15, 0)$  & 0.305 & 2.219 & 0.640 & 1.930 & 0.555 & 0.815 & 0.594 & 1.162 \\
         & $(20, 0)$ & \textbf{0.514} & \textbf{2.156} & \textbf{0.597} & \textbf{0.977} & 0.553 & 0.945 & 0.540 & 0.682 \\
         & $(25, 0)$  & 0.494 & 1.086 & 0.588 & 0.919 & 0.502 & 1.273 & 0.523 & 0.825\\
         & $(15, 1)$  & 0.367 & 27.667 & 0.579 & 14.850 & \textbf{0.596} & \textbf{1.256} & \textbf{0.591} & \textbf{1.167} \\
         & $(15, 2)$  & 0.457 & 2.268 & 0.585 & 3.192 & 0.593 & 1.712 & 0.592 & 2.819 \\
         & $(15, 5)$ & 0.463 & 2.065 & 0.570 & 3.208 & 0.614 & 1.181 & 0.563 & 0.788 \\
        \midrule
        
        \multirow{8}{*}{DRCFR} 
         & $(15, 0)$  & 0.642 & 0.982 & 0.605 & 0.961 & \textbf{0.622} & \textbf{0.888} & \textbf{0.566} & \textbf{0.713} \\
         & $(25, 0)$ & 0.560 & 6.723 & 0.580 & 1.494 & 0.593 & 12.603 & 0.558 & 1.086\\
         & $(5, 1)$  & \textbf{0.660} & \textbf{0.779} & \textbf{0.612} & \textbf{3.389} & 0.518 & 1.065 & 0.549 & 0.729 \\
         & $(10, 1)$  & 0.473 & 3.010 & 0.557 & 2.173 & 0.520 & 1.065 & 0.561 & 0.715 \\
         & $(15, 1)$  & 0.505 & 19.976 & 0.458 & 2.443 & 0.534 & 2.257 & 0.588 & 0.676 \\
         & $(15, 2)$  & 0.500 & 2.845 & 0.472 & 2.065 & 0.545 & 14.302 & 0.586 & 0.735 \\
         & $(15, 5)$ & 0.496 & 1.497 & 0.495 & 1.379 & 0.532 & 1.298 & 0.573 & 1.098 \\
        \midrule
        
        \multirow{8}{*}{\textbf{POUL (Ours)}} 
         & $(1, 1)$   & 0.720 & 1.578 & 0.596 & 1.363 & 0.723 & 1.525 & 0.681 & 1.823 \\
         & $(1, 3)$   & \textbf{0.826} & \textbf{1.333} & \textbf{0.635} & \textbf{1.386} & 0.734 & 1.887 & 0.617 & 1.861 \\
         & $(1, 5)$   & 0.661 & 1.430 & 0.620 & 1.468 & 0.750 & 1.271 & 0.670 & 1.700 \\
         & $(2, 1)$   & 0.668 & 1.485 & 0.610 & 1.428 & 0.747 & 1.468 & 0.657 & 1.746 \\
         & $(5, 1)$   & 0.699 & 1.284 & 0.675 & 1.480 & \textbf{0.792} & \textbf{1.570} & \textbf{0.690} & \textbf{2.023} \\
         & $(5, 2)$   & 0.659 & 1.232 & 0.656 & 1.423 & 0.767 & 1.572 & 0.699 & 59.500 \\
         & $(5, 5)$   & 0.636 & 1.342 & 0.627 & 1.593 & 0.656 & 1.693 & 0.699 & 1.610 \\
         & $(5, 10)$  & 0.672 & 1.300 & 0.627 & 1.532 & 0.777 & 1.401 & 0.693 & 1.770 \\

        \bottomrule
    \end{tabular}
    } 
\end{table}

\begin{table}[ht]
    \centering
    \caption{Detailed performance evaluation of \textbf{Cost prediction} on the \textbf{June Test Set}. The \textbf{Training Config} column specifies the training data source and tree count ($n_{tree}$) for Causal Forest, and the training epoch tuple $(i, j)$ for deep learning models (Global, Core).}
    \label{tab:detailed_epochs_cost}
    \renewcommand{\arraystretch}{1.2} 
    \setlength{\tabcolsep}{3pt}      
    
    \resizebox{\textwidth}{!}{
    \begin{tabular}{l c c c c c c c c c}
        \toprule
        \multirow{3}{*}{\textbf{Model}} & \multirow{3}{*}{\textbf{Training Config}} & \multicolumn{4}{c}{\textbf{Test Set: Global ($\mathcal{D}_G$)}} & \multicolumn{4}{c}{\textbf{Test Set: Core ($\mathcal{D}_C$)}} \\
        \cmidrule(lr){3-6} \cmidrule(lr){7-10}
         & & \multicolumn{2}{c}{\textbf{Treatment 1 ($T_1$)}} & \multicolumn{2}{c}{\textbf{Treatment 2 ($T_2$)}} & \multicolumn{2}{c}{\textbf{Treatment 1 ($T_1$)}} & \multicolumn{2}{c}{\textbf{Treatment 2 ($T_2$)}} \\
        \cmidrule(lr){3-4} \cmidrule(lr){5-6} \cmidrule(lr){7-8} \cmidrule(lr){9-10}
         & & \textbf{AUUC}$\uparrow$ & \textbf{MAPE}$\downarrow$ & \textbf{AUUC}$\uparrow$ & \textbf{MAPE}$\downarrow$ & \textbf{AUUC}$\uparrow$ & \textbf{MAPE}$\downarrow$ & \textbf{AUUC}$\uparrow$ & \textbf{MAPE}$\downarrow$ \\
        \midrule
        
        Random & --- & 0.500 & --- & 0.500 & --- & 0.500 & --- & 0.500 & --- \\
        
        \multirow{6}{*}{Causal Forest} 
         & $\mathcal{D}_G$ ($n_{tree}{=}300$) & 0.440 & 0.971 & 0.412 & 0.962 & \textbf{0.576} & \textbf{0.946} & \textbf{0.596} & \textbf{0.969} \\
         & $\mathcal{D}_G$ ($n_{tree}{=}500$) & \textbf{0.692} & \textbf{0.792} & \textbf{0.697} & \textbf{0.881} & 0.550 & 1.018 & 0.549 & 0.993 \\
         & $\mathcal{D}_G$ ($n_{tree}{=}600$) & 0.701 & 0.954 & 0.707 & 0.916 & 0.566 & 0.994 & 0.498 & 0.999 \\
         & $\mathcal{D}_C$ ($n_{tree}{=}300$) & 0.390 & 0.664 & 0.537 & 0.873 & 0.518 & 0.887 & 0.549 & 0.936 \\
         & $\mathcal{D}_C$ ($n_{tree}{=}500$) & 0.610 & 0.618 & 0.599 & 0.807 & 0.527 & 0.964 & 0.552 & 0.984 \\
         & $\mathcal{D}_C$ ($n_{tree}{=}600$) & 0.629 & 0.599 & 0.552 & 0.790 & 0.460 & 0.912 & 0.555 & 0.967 \\
         
        \midrule
        
        \multirow{7}{*}{T-Learner (DNN)} 
         & $(10, 0)$ & 0.739 & 0.254 & 0.775 & 0.376 & 0.608 & 0.553 & 0.648 & 0.752 \\
         & $(15, 0)$ & \textbf{0.739} & \textbf{0.221} & \textbf{0.777} & \textbf{0.316} & 0.603 & 0.548 & 0.646 & 0.720 \\
         & $(20, 0)$ & 0.737 & 0.236 & 0.776 & 0.341 & 0.602 & 0.555 & 0.646 & 0.765 \\
         & $(25, 0)$ & 0.735 & 0.275 & 0.776 & 0.421 & 0.601 & 0.541 & 0.646 & 0.769 \\
         & $(10, 1)$ & 0.731 & 2.131 & 0.768 & 0.670 & \textbf{0.616} & \textbf{0.191} & \textbf{0.645} & \textbf{0.445} \\
         & $(10, 2)$ & 0.728 & 2.391 & 0.763 & 0.911 & 0.613 & 0.172 & 0.645 & 0.435 \\
         & $(10, 5)$ & 0.725 & 2.615 & 0.759 & 1.062 & 0.609 & 0.149 & 0.644 & 0.425 \\
        \midrule
        
        \multirow{6}{*}{DESCN} 
         & $(15, 0)$ & \textbf{0.735} & \textbf{0.113} & \textbf{0.777} & \textbf{0.409} & 0.599 & 0.555 & 0.643 & 0.680 \\
         & $(20, 0)$ & 0.732 & 0.204 & 0.775 & 0.310 & 0.596 & 0.534 & 0.643 & 0.722 \\
         & $(25, 0)$ & 0.733 & 0.157 & 0.777 & 0.494 & 0.593 & 0.618 & 0.643 & 0.752 \\
         & $(15, 1)$ & 0.710 & 2.795 & 0.768 & 1.110 & \textbf{0.608} & \textbf{0.103} & \textbf{0.641} & \textbf{0.456} \\
         & $(15, 2)$ & 0.723 & 2.856 & 0.767 & 1.319 & 0.608 & 0.096 & 0.642 & 0.448 \\
         & $(15, 5)$ & 0.718 & 2.750 & 0.763 & 1.589 & 0.603 & 0.150 & 0.644 & 0.468 \\
        \midrule
        
        \multirow{8}{*}{DRCFR} 
         & $(15, 0)$ & \textbf{0.715} & \textbf{0.364} & \textbf{0.753} & \textbf{0.607} & 0.591 & 0.598 & 0.632 & 0.674 \\
         & $(25, 0)$ & 0.703 & 0.361 & 0.744 & 0.568 & 0.576 & 0.724 & 0.617 & 0.848 \\
         & $(5, 1)$ & 0.715 & 2.800 & 0.758 & 1.829 & \textbf{0.586} & \textbf{0.177} & \textbf{0.641} & \textbf{0.344}  \\
         & $(10, 1)$ & 0.705 & 2.926 & 0.755 & 1.275 & 0.581 & 0.159 & 0.637 & 0.363 \\
         & $(15, 1)$ & 0.687 & 2.698 & 0.749 & 1.085 & 0.584 & 0.206 & 0.634 & 0.414 \\
         & $(15, 2)$ & 0.690 & 2.709 & 0.746 & 1.330 & 0.588 & 0.202 & 0.632 & 0.379 \\
         & $(15, 5)$ & 0.681 & 2.721 & 0.723 & 1.484 & 0.590 & 0.225 & 0.626 & 0.396 \\
        \midrule
        
        \multirow{8}{*}{\textbf{POUL (Ours)}} 
         & $(1, 1)$ & \textbf{0.735} & \textbf{0.838} & \textbf{0.917} & \textbf{0.755} & 0.737 & 0.854 & 0.917 & 0.782 \\
         & $(1, 3)$ & 0.737 & 0.847 & 0.917 & 0.760 & 0.738 & 0.847 & 0.917 & 0.775 \\
         & $(1, 5)$ & 0.733 & 0.837 & 0.916 & 0.761 & 0.735 & 0.847 & 0.917 & 0.774 \\
         & $(2, 1)$ & 0.735 & 0.845 & 0.917 & 0.756 & 0.736 & 0.852 & 0.916 & 0.782 \\
         & $(5, 1)$ & 0.736 & 0.847 & 0.916 & 0.759 & 0.735 & 0.853 & 0.917 & 0.780 \\
         & $(5, 2)$ & 0.734 & 0.854 & 0.916 & 0.767 & 0.733 & 0.844 & 0.916 & 0.777 \\
         & $(5, 5)$ & 0.734 & 0.845 & 0.916 & 0.757 & 0.735 & 0.846 & 0.917 & 0.774 \\
         & $(5, 10)$ & 0.735 & 0.845 & 0.917 & 0.758 & \textbf{0.736} & \textbf{0.844} & \textbf{0.916} & \textbf{0.769} \\
        
        \bottomrule
    \end{tabular}
    }
\end{table}

\end{document}